\documentclass{lmcs}

\keywords{categorical deep learning, linear logic, subexponentials, proof theory, resource-sensitive architectures}

\usepackage{booktabs}
\usepackage{tikz-cd}
\usepackage{amsmath}
\usepackage{mathtools}
\usepackage{amssymb}
\usepackage{bussproofs}
\newcommand{\diag}{\mathrm{diag}}
\DeclareMathOperator{\id}{id}
\DeclareMathOperator{\Ob}{Ob}
\DeclareMathOperator{\disc}{disc}
\DeclareMathOperator{\Str}{Str}
\DeclareMathOperator{\Out}{Out}

\newcommand{\Ctx}{\mathbf{Ctx}}
\newcommand{\Arch}{\mathbf{Arch}}

\newcommand{\TZ}{\mathsf{TZ}}
\newcommand{\Diag}{\mathrm{Diag}}
\newcommand{\Sig}{\Sigma}

\newcommand{\sem}[1]{[\![#1]\!]}
\newcommand{\zone}[2]{#1\!:\!#2}
\newcommand{\out}{\mathsf{o}}

\newcommand{\W}{\mathcal{W}}
\newcommand{\C}{\mathcal{C}}
\newcommand{\tensor}{\otimes}
\newcommand{\unit}{I}
\newcommand{\ctx}[1]{\langle #1 \rangle}
\newcommand{\Lic}{\mathsf{Lic}}

\begin{document}

\title[Architectures to subexponential logic]{From Categorized Neural Architectures\texorpdfstring{\\}{} to Subexponential Proof Theory}

\author[C.~Ram\'{\i}rez Ovalle]{Carlos  Ram\'{\i}rez Ovalle\lmcsorcid{0000-0002-4254-5610}}[a]

\address{Department of Natural Sciences and Mathematics, Pontificia Universidad Javeriana Cali, Colombia}
\email{carlosovalle@javerianacali.edu.co}

\begin{abstract}
We study a resource-sensitive fragment of the problem of extracting a logical discipline from a class of neural architectures by passing through categorization. The starting point is not a pre-existing logic but a category of zone-labelled parametrised blocks together with a disciplined record of which forms of copying, discarding, and zone coercion are architecturally licensed. From this categorized architecture we read off a subexponential signature and then define a tensorial sequent calculus whose structural rules are indexed by the extracted zones. The paper proves three kinds of results. First, the resulting architectural category is symmetric monoidal. Second, the extracted proof system admits cut elimination. Third, derivations are sound with respect to the licensed categorical diagrams generated by the architectural discipline. The outcome is a theorem-bearing core of the architecture-to-category-to-logic programme: subexponential structure is not postulated in advance but read from categorical data encoding differentiated memory and context behaviour.
\end{abstract}

\maketitle

\section{Introduction}\label{sec:intro}

The use of category theory in machine learning has progressed well beyond analogy. Compositional accounts of supervised learning and backpropagation show that parametrised learning systems admit categorical semantics in which composition and update can be studied functorially \cite{FongSpivakTuyeras2019}. More recently, categorical deep learning has been proposed as a broad algebraic language for architectural composition, reparametrisation, and weight sharing \cite{GavranovicEtAl2024}. On the other hand, categorical logic studies how structural features of categories give rise to doctrines, internal languages, and proof systems \cite{Jacobs1999,MacLane1998}. These two lines suggest a natural question: can one begin from a suitably organized family of neural architectures, pass to a categorical description of their resource behaviour, and then extract an appropriate logic from that categorical structure?

The present article develops a first theorem-bearing answer in a resource-sensitive setting. The emphasis is important. Our aim is not to begin with a fully formed logic and then search for a semantics. Instead, we start with a zone-labelled architectural formalism. Some forms of information are reusable, some discardable, some linear, and some coercible from one zone to another. These permissions are encoded categorically. Only after the categorical organization is fixed do we read off a subexponential signature and define the corresponding proof theory.

This order of construction can be displayed schematically as
\[
\text{architecture family}
\longrightarrow
\text{categorized architecture}
\longrightarrow
\text{subexponential signature}
\longrightarrow
\text{proof theory}.
\]
The final step is then to show that the extracted proof system is sound in the original categorized architecture. In this sense the category plays two roles: it is first an abstraction of architectural structure and only afterwards a semantics validating the logic extracted from that structure.

For broader background on categorical semantics of linear logic, including the passage
from linear structure to categorical models and the treatment of exponentials, see
\cite{See89,Bar91,HS03,Mel09}.

The resulting logic is not intended to describe an entire transformer or a complete contemporary learning stack. That would require a richer treatment of nonlinearities, internal homs, and likely a linear--nonlinear adjunction in the sense of Benton \cite{Benton1995}. The goal here is more focused. We isolate the part of the logical discipline already forced by differentiated structural behaviour. This leads naturally to a tensorial fragment with zone-indexed weakening and contraction, that is, to a subexponential proof theory in which structural permissions depend on zones \cite{NigamMiller2009,XavierOlartePimentel2022}. The proof-theoretic stance is therefore deliberately modest but mathematically crisp.

The main novelty of the paper is not the mere use of subexponentials. Subexponential proof theory and semantics are by now well established, and recent work has pushed their categorical analysis significantly further \cite{Rogozin2025}. What is new here is the direction of travel: the subexponential data are not fixed beforehand as logical syntax, but extracted from a categorized resource discipline motivated by architectures with differentiated memory and context. The key theorem is therefore an extraction-and-soundness result rather than a purely semantic one.

The contributions of the article are the following.
\begin{itemize}
\item We define a category $\Arch_Z$ of zone-labelled parametrised architectural blocks and prove that it is symmetric monoidal.
\item We introduce the notion of a \emph{coherent zone discipline} on $\Arch_Z$, consisting of licensed zone coercions, discards, and diagonals.
\item From such a disciplined category we extract a subexponential signature $\Sig_{\Lic}$ and an associated tensorial sequent calculus $\TZ_{\Sig_{\Lic}}$.
\item We prove cut elimination for $\TZ_{\Sig_{\Lic}}$.
\item We prove soundness of $\TZ_{\Sig_{\Lic}}$ with respect to the class of licensed diagrams generated by the categorized architecture.
\end{itemize}

The article should therefore be read as a foundational first step in the programme
\[
\text{neural architecture} \to \text{category} \to \text{logic},
\]
with particular attention to resource-sensitive behaviour. The missing connective is linear implication. We explain below why its absence does not undercut the present contribution: the current theorems already identify the structural fragment of the logic that is forced by the architecture, and that fragment is the one most directly tied to differentiated resource usage.

At a high level, the main theorem-bearing content of the paper can be summarized as follows:
from a coherently disciplined category of zone-labelled architectural blocks one can
canonically reconstruct a subexponential signature, define the associated tensorial
proof theory, and interpret every derivation back into the licensed categorical diagrams.

Accordingly, the present article should not be read as a logic of all neural architectures,
nor as a full logic of transformers, but as a theorem-bearing identification of the structural
resource fragment already forced by categorized architectural permissions.

The individual results of Sections~2--4 may thus be read as the categorical, syntactic,
and semantic components of a single extraction theorem.

Section~4 closes with explicit examples, including a two-block neural architecture showing
concretely how differentiated input roles induce the extracted substructural discipline.

\section{Categorized architectures}\label{sec:arch}

\subsection{Typed resource contexts and the architectural category}

We now introduce the categorical object that mediates between architecture and logic. 
The purpose of this subsection is to extract, from a resource-sensitive architectural description, 
a category whose objects are zone-labelled typed contexts and whose morphisms are parametrised computational blocks. 
This construction is the first step in the chain
\[
\text{architecture}\;\longrightarrow\;\text{categorisation}\;\longrightarrow\;\text{subexponential discipline}\;\longrightarrow\;\text{proof theory}.
\]
At this stage, only the compositional structure of contexts and blocks is relevant. 
Accordingly, we work over a category with finite products and make no further assumptions.

Let $\Ctx$ be a category with finite products, and let $Z$ be a fixed set of zone labels.

\begin{defi}[Typed resource context]\label{def:typed-context}
A \emph{typed resource context} is a finite list
\[
\Gamma=((z_1,A_1),\dots,(z_n,A_n)),
\]
where each $z_i\in Z$ and each $A_i\in\Ob(\Ctx)$. Its \emph{carrier} is the object
\[
\sem{\Gamma}:=A_1\times\cdots\times A_n,
\]
with the convention that the carrier of the empty list is the terminal object $1$ of $\Ctx$. 
The empty context is denoted by $\emptyset$.

If
\[
\Gamma=((z_1,A_1),\dots,(z_n,A_n))
\quad\text{and}\quad
\Delta=((w_1,B_1),\dots,(w_m,B_m)),
\]
their \emph{tensor concatenation} is the context
\[
\Gamma\tensor\Delta
:=
((z_1,A_1),\dots,(z_n,A_n),(w_1,B_1),\dots,(w_m,B_m)).
\]
\end{defi}

The notation $\Gamma\tensor\Delta$ is justified by the fact that concatenation of contexts is implemented on carriers by the product operation in $\Ctx$.

\begin{lem}\label{lem:carrier-tensor}
For all typed resource contexts $\Gamma$ and $\Delta$, there is a canonical isomorphism
\[
\mu_{\Gamma,\Delta}\colon \sem{\Gamma\tensor\Delta}\xrightarrow{\cong}\sem{\Gamma}\times\sem{\Delta}.
\]
\end{lem}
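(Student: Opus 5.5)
The plan is to fix a bracketing convention for $n$-ary products and then build $\mu_{\Gamma,\Delta}$ purely from the universal property of products in $\Ctx$, so that canonicity comes for free. I read $\sem{\Gamma}=A_1\times\cdots\times A_n$ as a chosen $n$-ary product, with projections $\pi_i\colon\sem{\Gamma}\to A_i$; equivalently, one may take a chosen left-nested binary bracketing together with the terminal object $1$ for $n=0$. Write $\Gamma\tensor\Delta$ with component objects $A_1,\dots,A_n,B_1,\dots,B_m$. Then $\sem{\Gamma\tensor\Delta}$ is a product of these $n+m$ objects, with projections $p_1,\dots,p_{n+m}$.

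\textbf{Step 1: the forward map.} By the universal property of $\sem{\Gamma}$, the family $(p_1,\dots,p_n)$ induces a unique map $\ell\colon\sem{\Gamma\tensor\Delta}\to\sem{\Gamma}$ with $\pi_i\circ\ell=p_i$. Likewise $(p_{n+1},\dots,p_{n+m})$ induces $r\colon\sem{\Gamma\tensor\Delta}\to\sem{\Delta}$. Set
\[
\mu_{\Gamma,\Delta}:=\langle \ell,r\rangle .
\]

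\textbf{Step 2: the inverse.} Define $\nu\colon\sem{\Gamma}\times\sem{\Delta}\to\sem{\Gamma\tensor\Delta}$ as the map induced by the family
\[
(\pi_1\circ\mathrm{fst},\dots,\pi_n\circ\mathrm{fst},\ \pi'_1\circ\mathrm{snd},\dots,\pi'_m\circ\mathrm{snd}),
\]
where $\pi'_j$ are the projections of $\sem{\Delta}$ and $\mathrm{fst},\mathrm{snd}$ are the projections of the binary product.

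\textbf{Step 3: the two composites.} To show both composites are identities, I will not compute them directly. Instead I check that each composite agrees with the identity after every projection, and then invoke the uniqueness clause of the universal property.
\begin{itemize}
\item For $\nu\circ\mu$: postcomposing with $p_k$ gives $p_k$ for every $k$, so $\nu\circ\mu=\id$.
\item For $\mu\circ\nu$: postcomposing with $\mathrm{fst}$ and then with $\pi_i$ gives $\pi_i\circ\mathrm{fst}$. Hence $\mathrm{fst}\circ\mu\circ\nu=\mathrm{fst}$, and symmetrically $\mathrm{snd}\circ\mu\circ\nu=\mathrm{snd}$. So $\mu\circ\nu=\id$.
\end{itemize}

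\textbf{Step 4: edge cases.} These are the only place requiring care, and I expect this to be the main (mild) obstacle.
\begin{itemize}
\item If $\Gamma=\emptyset$, then $\sem{\Gamma}=1$, $\ell$ is the unique map to $1$, and $\mu$ amounts to $\sem{\Delta}\cong 1\times\sem{\Delta}$ via the unitor $\langle !,\id\rangle$. The case $\Delta=\emptyset$ is symmetric.
\item If $\Ctx$ is only given with binary products and a terminal object, and $\sem{-}$ is defined by a fixed bracketing, then the $n$-ary universal property still holds for the iterated product, so the argument above applies verbatim.
\end{itemize}
Alternatively, one could build $\mu$ by induction on the length of $\Delta$ from associators; by Mac Lane coherence for cartesian monoidal structure this agrees with the map of Step 1.

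\textbf{Canonicity.} I take ``canonical'' to mean that $\mu$ is the unique map commuting with all projections. Naturality in the component objects then also follows from uniqueness, which is what later sections will need when showing that $\Arch_Z$ is symmetric monoidal.
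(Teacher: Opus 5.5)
Your proposal is correct and follows the paper's proof essentially step for step. Both build $\mu_{\Gamma,\Delta}$ and its inverse from the universal property of finite products, then conclude that both composites are identities because they agree with the identity after every projection; your treatment of the empty-context and bracketing cases is a harmless refinement that the paper leaves implicit.
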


\begin{proof}
Write
\[
\Gamma=((z_1,A_1),\dots,(z_n,A_n)),
\qquad
\Delta=((w_1,B_1),\dots,(w_m,B_m)).
\]
By definition,
\[
\sem{\Gamma\tensor\Delta}
=
A_1\times\cdots\times A_n\times B_1\times\cdots\times B_m,
\]
whereas
\[
\sem{\Gamma}\times\sem{\Delta}
=
(A_1\times\cdots\times A_n)\times(B_1\times\cdots\times B_m).
\]
The universal property of finite products yields a unique morphism
\[
\mu_{\Gamma,\Delta}\colon \sem{\Gamma\tensor\Delta}\to \sem{\Gamma}\times\sem{\Delta}
\]
whose first component is the tuple of projections onto $A_1,\dots,A_n$ and whose second component is the tuple of projections onto $B_1,\dots,B_m$.

Similarly, there is a unique morphism
\[
\nu_{\Gamma,\Delta}\colon \sem{\Gamma}\times\sem{\Delta}\to \sem{\Gamma\tensor\Delta}
\]
whose components are the evident projections onto the factors
\[
A_1,\dots,A_n,B_1,\dots,B_m.
\]
The composites $\nu_{\Gamma,\Delta}\circ\mu_{\Gamma,\Delta}$ and $\mu_{\Gamma,\Delta}\circ\nu_{\Gamma,\Delta}$ are identities because they have the same projections as the corresponding identity morphisms. Hence $\mu_{\Gamma,\Delta}$ is an isomorphism, with inverse $\nu_{\Gamma,\Delta}$.\qedhere
\end{proof}

For $z\in Z$ and $A\in\Ob(\Ctx)$, we write
\[
\ctx{z,A}
\]
for the one-variable context $((z,A))$.

\begin{defi}[Parametrised resource block]\label{def:block}
Let $\Gamma$ and $\Delta$ be typed resource contexts. A \emph{parametrised resource block}
\[
(P,\varphi)\colon \Gamma\to\Delta
\]
consists of an object $P\in\Ob(\Ctx)$ together with a morphism
\[
\varphi\colon P\times\sem{\Gamma}\longrightarrow\sem{\Delta}
\]
in $\Ctx$.
\end{defi}

The parameter object is part of the presentation of a block, but not part of its essential behaviour. 
Two parametrised presentations should therefore be identified whenever they differ only by an isomorphic reparametrisation.

\begin{defi}[Isomorphism of parametrised blocks]\label{def:block-iso}
Let $(P,\varphi)$ and $(P',\varphi')$ be parametrised resource blocks $\Gamma\to\Delta$. 
An \emph{isomorphism of parametrised blocks}
\[
u\colon (P,\varphi)\xrightarrow{\cong}(P',\varphi')
\]
is an isomorphism $u\colon P\to P'$ in $\Ctx$ such that
\[
\varphi'\circ(u\times\id_{\sem{\Gamma}})=\varphi.
\]
We write
\[
(P,\varphi)\sim(P',\varphi')
\]
when such an isomorphism exists.
\end{defi}

\begin{lem}\label{lem:block-iso-equiv}
For fixed contexts $\Gamma$ and $\Delta$, the relation $\sim$ on parametrised resource blocks $\Gamma\to\Delta$ is an equivalence relation.
\end{lem}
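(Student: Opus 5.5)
We verify the three axioms of an equivalence relation directly from Definition~\ref{def:block-iso}. In each case we exhibit an explicit isomorphism in $\Ctx$ and check the defining equation $\varphi'\circ(u\times\id_{\sem{\Gamma}})=\varphi$. The only tool needed is functoriality of the product, namely
\[
(v\times\id)\circ(u\times\id)=(v\circ u)\times\id
\qquad\text{and}\qquad
\id_P\times\id_{\sem{\Gamma}}=\id_{P\times\sem{\Gamma}}.
\]
Both identities follow from the universal property of binary products, since each side has the same two projections.

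\emph{Reflexivity.} For a block $(P,\varphi)$ we take $u=\id_P$. Then
\[
\varphi\circ(\id_P\times\id_{\sem{\Gamma}})=\varphi\circ\id_{P\times\sem{\Gamma}}=\varphi .
\]

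\emph{Symmetry.} Suppose $u\colon(P,\varphi)\to(P',\varphi')$ witnesses $(P,\varphi)\sim(P',\varphi')$. We take $u^{-1}\colon P'\to P$, which is an isomorphism, and must check that $\varphi\circ(u^{-1}\times\id)=\varphi'$. Substituting $\varphi=\varphi'\circ(u\times\id)$ and using functoriality, the left-hand side becomes
\[
\varphi'\circ\bigl((u\circ u^{-1})\times\id\bigr)=\varphi'\circ(\id_{P'}\times\id)=\varphi' .
\]

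\emph{Transitivity.} Suppose $u\colon(P,\varphi)\to(P',\varphi')$ and $v\colon(P',\varphi')\to(P'',\varphi'')$. We take $v\circ u$, which is an isomorphism because it is a composite of isomorphisms. By functoriality,
\[
\varphi''\circ\bigl((v\circ u)\times\id\bigr)=\bigl(\varphi''\circ(v\times\id)\bigr)\circ(u\times\id)=\varphi'\circ(u\times\id)=\varphi .
\]

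No step is a real obstacle. The only point needing care is the functoriality of $-\times\id_{\sem{\Gamma}}$. One could instead take it as standard, since $-\times X$ is a functor whenever binary products exist.
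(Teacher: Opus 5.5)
Your proof is correct and follows essentially the same route as the paper: you use $\id_P$ for reflexivity, $u^{-1}$ for symmetry, and $v\circ u$ for transitivity, each verified via functoriality of $-\times\id_{\sem{\Gamma}}$. The only cosmetic difference is in the symmetry step, where you substitute $\varphi=\varphi'\circ(u\times\id)$ and simplify, while the paper precomposes the defining equation with $u^{-1}\times\id$; these are the same computation.
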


\begin{proof}
Reflexivity is immediate, since
\[
\varphi\circ(\id_P\times\id_{\sem{\Gamma}})=\varphi.
\]

For symmetry, assume that
\[
u\colon (P,\varphi)\xrightarrow{\cong}(P',\varphi')
\]
is an isomorphism of parametrised blocks. Then
\[
\varphi'\circ(u\times\id_{\sem{\Gamma}})=\varphi.
\]
Composing on the right with $u^{-1}\times\id_{\sem{\Gamma}}$ gives
\[
\varphi\circ(u^{-1}\times\id_{\sem{\Gamma}})=\varphi',
\]
so
\[
u^{-1}\colon (P',\varphi')\xrightarrow{\cong}(P,\varphi)
\]
is again an isomorphism of parametrised blocks.

For transitivity, suppose
\[
u\colon (P,\varphi)\xrightarrow{\cong}(P',\varphi'),
\qquad
v\colon (P',\varphi')\xrightarrow{\cong}(P'',\varphi'').
\]
Then
\[
\varphi'\circ(u\times\id_{\sem{\Gamma}})=\varphi,
\qquad
\varphi''\circ(v\times\id_{\sem{\Gamma}})=\varphi'.
\]
Substituting the second equation into the first yields
\[
\varphi''\circ(v\times\id_{\sem{\Gamma}})\circ(u\times\id_{\sem{\Gamma}})=\varphi.
\]
Since the product is functorial,
\[
(v\times\id_{\sem{\Gamma}})\circ(u\times\id_{\sem{\Gamma}})
=
(v\circ u)\times\id_{\sem{\Gamma}},
\]
hence
\[
\varphi''\circ((v\circ u)\times\id_{\sem{\Gamma}})=\varphi.
\]
Therefore
\[
v\circ u\colon (P,\varphi)\xrightarrow{\cong}(P'',\varphi''),
\]
and $\sim$ is transitive.\qedhere
\end{proof}

\begin{defi}[The architectural category $\Arch_Z$]\label{def:arch}
The objects of $\Arch_Z$ are typed resource contexts.

A morphism
\[
[\![P,\varphi]\!]\colon \Gamma\to\Delta
\]
is a $\sim$-equivalence class of parametrised resource blocks $(P,\varphi)\colon \Gamma\to\Delta$.

Given morphisms
\[
[\![P,\varphi]\!]\colon \Gamma\to\Delta,
\qquad
[\![Q,\psi]\!]\colon \Delta\to\Theta,
\]
their composite is defined by
\[
[\![Q,\psi]\!]\circ[\![P,\varphi]\!]
:=
[\![Q\times P,\,
\psi\circ(\id_Q\times\varphi)\circ\alpha^{-1}_{Q,P,\Gamma}]\!],
\]
where
\[
\alpha_{Q,P,\Gamma}\colon
Q\times(P\times\sem{\Gamma})
\xrightarrow{\cong}
(Q\times P)\times\sem{\Gamma}
\]
is the canonical associativity isomorphism of the finite product.

The identity on $\Gamma$ is the class of
\[
(1,\pi_2)\colon \Gamma\to\Gamma,
\]
where $\pi_2\colon 1\times\sem{\Gamma}\to\sem{\Gamma}$ is the second projection.
\end{defi}

This composition is the evident sequential execution of parametrised blocks. 
The first block consumes the input context, and its output is fed to the second block; the composite parameter object simply remembers both parameter choices.

\begin{figure}[t]
\[
\begin{tikzcd}[column sep=large,row sep=large]
(Q\times P)\times \sem{\Gamma}
  \arrow[r,"\alpha^{-1}_{Q,P,\Gamma}"]
  \arrow[rrr,bend right=18,
    "\psi\circ(\id_Q\times\varphi)\circ\alpha^{-1}_{Q,P,\Gamma}"']
&
Q\times(P\times \sem{\Gamma})
  \arrow[r,"\id_Q\times\varphi"]
&
Q\times \sem{\Delta}
  \arrow[r,"\psi"]
&
\sem{\Theta}
\end{tikzcd}
\]
\caption{Sequential composition in $\Arch_Z$.}
\label{fig:composition}
\end{figure}

\begin{prop}\label{prop:category}
The data of \autoref{def:arch} define a category $\Arch_Z$.
\end{prop}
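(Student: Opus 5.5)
We must verify three things: composition is well defined on $\sim$-classes, it is associative, and the classes of $(1,\pi_2)$ are two-sided units. In each case we exhibit an explicit reparametrisation isomorphism in the sense of \autoref{def:block-iso}. This works because the required equations hold only up to the canonical isomorphisms of finite products, such as $Q\times(P\times R)\cong(Q\times P)\times R$, $1\times P\cong P$ and $P\times 1\cong P$. \autoref{lem:block-iso-equiv} then lets us chain such witnesses together.

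\emph{Well-definedness.} Suppose $u\colon(P,\varphi)\cong(P',\varphi')$ and $v\colon(Q,\psi)\cong(Q',\psi')$. We propose $v\times u\colon Q\times P\to Q'\times P'$ as the witness. To check
\[
\psi'\circ(\id_{Q'}\times\varphi')\circ\alpha^{-1}_{Q',P',\Gamma}\circ((v\times u)\times\id_{\sem{\Gamma}})=\psi\circ(\id_Q\times\varphi)\circ\alpha^{-1}_{Q,P,\Gamma},
\]
we use naturality of $\alpha^{-1}$ to move $v\times u$ past it, turning it into $v\times(u\times\id_{\sem{\Gamma}})$. Bifunctoriality of $\times$ then splits the left-hand side as
\[
\psi'\circ(v\times\id_{\sem{\Delta}})\circ(\id_Q\times(\varphi'\circ(u\times\id_{\sem{\Gamma}}))).
\]
The two defining equations of $u$ and $v$ finish the computation. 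Since well-definedness is independent of the chosen representatives, we may work with representatives from now on.

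\emph{Unit laws.} For the left unit, the composite $[\![1,\pi_2]\!]\circ[\![P,\varphi]\!]$ has parameter $1\times P$ and map $\pi_2\circ(\id_1\times\varphi)\circ\alpha^{-1}$. We take the projection $\pi_2\colon 1\times P\to P$ as the witness isomorphism. The required equation then reduces, by checking on components and using that $\pi_2\circ(\id_1\times f)=f\circ\pi_2$, to naturality of the projection. For the right unit, the parameter is $P\times 1$, the witness is $\pi_1\colon P\times 1\to P$, and the check is the analogous one.

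\emph{Associativity.} This is the main obstacle, chiefly as a bookkeeping matter. Consider $(R,\chi)\circ((Q,\psi)\circ(P,\varphi))$ and $((R,\chi)\circ(Q,\psi))\circ(P,\varphi)$. Their parameter objects are $R\times(Q\times P)$ and $(R\times Q)\times P$, and the natural witness is $\alpha_{R,Q,P}$. Rather than manipulating large composites of associators, we plan to verify the equation after precomposing with the isomorphism from the flat fourfold product $R\times Q\times P\times\sem{\Gamma}$, in the spirit of \autoref{lem:carrier-tensor}. On the flat product, both sides become literally the same morphism: apply $\varphi$ to the $P$ and $\sem{\Gamma}$ components, then $\psi$ with $Q$, then $\chi$ with $R$. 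The only facts needed are the universal property of the product, in the form that two maps into a product agree when their projections agree, and the pentagon/coherence for the cartesian associator. Cartesian coherence is automatic here because all structure maps are determined by projections. We therefore expect associativity to follow once this flattening is set up carefully.
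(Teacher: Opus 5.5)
Your proposal is correct and follows the paper's proof essentially step for step. The witnesses are the same: $v\times u$ for well-definedness, $\alpha_{R,Q,P}$ for associativity, and $1\times P\cong P$, $P\times 1\cong P$ for the unit laws (your $\pi_2$ and $\pi_1$ are the paper's $\lambda_P$ and $\rho_P$), and associativity is settled in both by reassociating to the flat product and appealing to the universal property. You are slightly more explicit than the paper about where the pentagon enters. The only slip is that your split form of the left-hand side in the well-definedness display should still end with $\circ\,\alpha^{-1}_{Q,P,\Gamma}$.
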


\begin{proof}
We first prove that composition is well defined on equivalence classes.

Let
\[
(P,\varphi)\sim(P',\varphi'),
\qquad
(Q,\psi)\sim(Q',\psi').
\]
Choose isomorphisms
\[
u\colon P\xrightarrow{\cong}P',
\qquad
v\colon Q\xrightarrow{\cong}Q'
\]
such that
\[
\varphi'\circ(u\times\id_{\sem{\Gamma}})=\varphi,
\qquad
\psi'\circ(v\times\id_{\sem{\Delta}})=\psi.
\]
We must show that the blocks
\[
(Q\times P,\,
\psi\circ(\id_Q\times\varphi)\circ\alpha^{-1}_{Q,P,\Gamma})
\]
and
\[
(Q'\times P',\,
\psi'\circ(\id_{Q'}\times\varphi')\circ\alpha^{-1}_{Q',P',\Gamma})
\]
are equivalent. The candidate isomorphism is
\[
v\times u\colon Q\times P\xrightarrow{\cong}Q'\times P'.
\]
Using functoriality of products and naturality of $\alpha^{-1}$, we compute
\begin{align*}
&\psi'\circ(\id_{Q'}\times\varphi')\circ\alpha^{-1}_{Q',P',\Gamma}
\circ((v\times u)\times\id_{\sem{\Gamma}}) \\
&\qquad =
\psi'\circ(\id_{Q'}\times\varphi')
\circ(v\times(u\times\id_{\sem{\Gamma}}))
\circ\alpha^{-1}_{Q,P,\Gamma} \\
&\qquad =
\psi'\circ(v\times\id_{\sem{\Delta}})
\circ(\id_Q\times\varphi)
\circ\alpha^{-1}_{Q,P,\Gamma} \\
&\qquad =
\psi\circ(\id_Q\times\varphi)\circ\alpha^{-1}_{Q,P,\Gamma}.
\end{align*}
Hence composition is well defined.

We next verify associativity. Let
\[
[\![P,\varphi]\!]\colon \Gamma\to\Delta,\qquad
[\![Q,\psi]\!]\colon \Delta\to\Theta,\qquad
[\![R,\chi]\!]\colon \Theta\to\Lambda.
\]
The two triple composites are represented, respectively, by
\[
((R\times Q)\times P,\;
\chi\circ(\id_R\times\psi)\circ\alpha^{-1}_{R,Q,\Theta}
\circ(\id_{R\times Q}\times\varphi)\circ\alpha^{-1}_{R\times Q,P,\Gamma})
\]
and
\[
(R\times(Q\times P),\;
\chi\circ(\id_R\times(\psi\circ(\id_Q\times\varphi)\circ\alpha^{-1}_{Q,P,\Gamma}))
\circ\alpha^{-1}_{R,Q\times P,\Gamma}).
\]
These representatives are equivalent via the canonical associativity isomorphism
\[
\alpha_{R,Q,P}\colon R\times(Q\times P)\xrightarrow{\cong}(R\times Q)\times P.
\]
Indeed, both composites describe the same iterated map obtained by reassociating
\[
R\times Q\times P\times\sem{\Gamma}
\]
so as to apply first $\varphi$, then $\psi$, then $\chi$. Since canonical maps between finite products are uniquely determined by their projections, the two resulting morphisms coincide.

Finally, we check the identity laws. Let
\[
[\![P,\varphi]\!]\colon \Gamma\to\Delta.
\]
The composite with the identity on the left is represented by
\[
(1\times P,\;
\pi_2\circ(\id_1\times\varphi)\circ\alpha^{-1}_{1,P,\Gamma}),
\]
which is equivalent to $(P,\varphi)$ via the canonical isomorphism
\[
\lambda_P\colon 1\times P\xrightarrow{\cong}P.
\]
Similarly, the composite with the identity on the right is represented by
\[
(P\times 1,\;
\varphi\circ(\id_P\times\pi_2)\circ\alpha^{-1}_{P,1,\Gamma}),
\]
which is equivalent to $(P,\varphi)$ via the canonical isomorphism
\[
\rho_P\colon P\times 1\xrightarrow{\cong}P.
\]
Therefore the stated identities are indeed units for composition, and $\Arch_Z$ is a category.\qedhere
\end{proof}

We next place blocks side by side. On contexts this is just concatenation; on morphisms it is parallel composition, obtained by keeping both parameter objects and letting each block see only its own part of the input carrier.

\begin{defi}[Tensor product on $\Arch_Z$]\label{def:tensor-arch}
On objects, the tensor product is context concatenation:
\[
\Gamma\tensor\Delta.
\]

On morphisms, let
\[
[\![P,\varphi]\!]\colon \Gamma\to\Gamma',
\qquad
[\![Q,\psi]\!]\colon \Delta\to\Delta'.
\]
Define
\[
[\![P,\varphi]\!]\tensor[\![Q,\psi]\!]
:=
[\![P\times Q,\varphi\tensor\psi]\!],
\]
where
\[
\varphi\tensor\psi\colon
(P\times Q)\times\sem{\Gamma\tensor\Delta}\longrightarrow \sem{\Gamma'\tensor\Delta'}
\]
is the composite
\[
(P\times Q)\times\sem{\Gamma\tensor\Delta}
\xrightarrow{\id\times\mu_{\Gamma,\Delta}}
(P\times Q)\times(\sem{\Gamma}\times\sem{\Delta})
\xrightarrow{\sigma}
(P\times\sem{\Gamma})\times(Q\times\sem{\Delta})
\xrightarrow{\varphi\times\psi}
\sem{\Gamma'}\times\sem{\Delta'}
\xrightarrow{\mu^{-1}_{\Gamma',\Delta'}}
\sem{\Gamma'\tensor\Delta'}.
\]
Here $\sigma$ denotes the canonical reassociation and symmetry isomorphism of finite products.
The monoidal unit is the empty context $\emptyset$.
\end{defi}

\begin{lem}\label{lem:tensor-well-defined}
The tensor product of \autoref{def:tensor-arch} is well defined on equivalence classes of parametrised resource blocks.
\end{lem}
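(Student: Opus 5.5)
Following the pattern of the well-definedness part of \autoref{prop:category}, we take representatives $(P,\varphi)\sim(P',\varphi')\colon\Gamma\to\Gamma'$ and $(Q,\psi)\sim(Q',\psi')\colon\Delta\to\Delta'$. We fix witnessing isomorphisms $u\colon P\to P'$ and $v\colon Q\to Q'$, so that $\varphi'\circ(u\times\id_{\sem{\Gamma}})=\varphi$ and $\psi'\circ(v\times\id_{\sem{\Delta}})=\psi$. The candidate witness for
\[
(P\times Q,\varphi\tensor\psi)\sim(P'\times Q',\varphi'\tensor\psi')
\]
is $u\times v\colon P\times Q\to P'\times Q'$, which is an isomorphism with inverse $u^{-1}\times v^{-1}$ by functoriality of $\times$. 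It remains to check that
\[
(\varphi'\tensor\psi')\circ((u\times v)\times\id_{\sem{\Gamma\tensor\Delta}})=\varphi\tensor\psi .
\]

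We unfold $\varphi'\tensor\psi'$ as the four-fold composite $\mu^{-1}_{\Gamma',\Delta'}\circ(\varphi'\times\psi')\circ\sigma\circ(\id\times\mu_{\Gamma,\Delta})$ and push the factor $(u\times v)\times\id$ rightwards, one step at a time.
\begin{enumerate}
\item The map $\id_{P'\times Q'}\times\mu_{\Gamma,\Delta}$ commutes with $(u\times v)\times\id$, since both are products of maps acting on disjoint factors. Bifunctoriality gives $(\id\times\mu)\circ((u\times v)\times\id)=((u\times v)\times\id)\circ(\id\times\mu)$, with the appropriate identities on each side.
\item Naturality of the canonical shuffle $\sigma$ in all four variables gives
\[
\sigma\circ((u\times v)\times(\id_{\sem{\Gamma}}\times\id_{\sem{\Delta}}))=((u\times\id_{\sem{\Gamma}})\times(v\times\id_{\sem{\Delta}}))\circ\sigma .
\]
To justify this, we note that $\sigma$ is the canonical map determined by projections, so naturality can be checked componentwise after postcomposing with the four projections, using the universal property exactly as in \autoref{lem:carrier-tensor}.
\item Functoriality of $\times$ then gives
\[
(\varphi'\times\psi')\circ((u\times\id)\times(v\times\id))=(\varphi'\circ(u\times\id))\times(\psi'\circ(v\times\id))=\varphi\times\psi,
\]
using the two hypotheses.
\end{enumerate}
Postcomposing with $\mu^{-1}_{\Gamma',\Delta'}$, which is left untouched, yields exactly $\varphi\tensor\psi$.

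The only step that needs care is the naturality of $\sigma$. It is not an instance of a single named coherence but a composite of associators and one symmetry, so its naturality in $P,Q,\sem{\Gamma},\sem{\Delta}$ must be argued. The first approach is the projection argument: both sides are maps into a product of four objects, and each projection of either side equals the same composite of a projection with $u$, $v$ or an identity. This completes the plan. Note also that the definition fixes the particular canonical isomorphisms $\mu$ and $\sigma$, and these depend only on the contexts, not on the representatives, so no further choice needs to be controlled.
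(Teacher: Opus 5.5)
Your proposal is correct and follows the same route as the paper: take $u\times v$ as the witnessing isomorphism and use naturality of the canonical isomorphisms ($\id\times\mu_{\Gamma,\Delta}$ and $\sigma$) to reduce to the square $(\varphi'\times\psi')\circ((u\times\id)\times(v\times\id))=\varphi\times\psi$, which holds by the two hypotheses. The only difference is that you spell out explicitly, including the projection argument for naturality of $\sigma$, what the paper compresses into the phrase ``using the naturality of the canonical product isomorphisms involved.''
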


\begin{proof}
Assume
\[
(P,\varphi)\sim(P',\varphi'),
\qquad
(Q,\psi)\sim(Q',\psi').
\]
Choose isomorphisms
\[
u\colon P\xrightarrow{\cong}P',
\qquad
v\colon Q\xrightarrow{\cong}Q'
\]
such that
\[
\varphi'\circ(u\times\id_{\sem{\Gamma}})=\varphi,
\qquad
\psi'\circ(v\times\id_{\sem{\Delta}})=\psi.
\]
We show that
\[
(P\times Q,\varphi\tensor\psi)\sim(P'\times Q',\varphi'\tensor\psi').
\]
The relevant isomorphism is $u\times v\colon P\times Q\to P'\times Q'$.

Expanding the definition of $\varphi\tensor\psi$ and $\varphi'\tensor\psi'$, and using the naturality of the canonical product isomorphisms involved, it suffices to verify commutativity of
\[
\begin{tikzcd}[column sep=huge,row sep=large]
(P\times\sem{\Gamma})\times(Q\times\sem{\Delta})
  \arrow[r,"\varphi\times\psi"]
  \arrow[d,"(u\times\id)\times(v\times\id)"']
&
\sem{\Gamma'}\times\sem{\Delta'}
  \arrow[d,equal]
\\
(P'\times\sem{\Gamma})\times(Q'\times\sem{\Delta})
  \arrow[r,"\varphi'\times\psi'"']
&
\sem{\Gamma'}\times\sem{\Delta'}.
\end{tikzcd}
\]
But this square commutes because
\[
\varphi'\circ(u\times\id_{\sem{\Gamma}})=\varphi
\qquad\text{and}\qquad
\psi'\circ(v\times\id_{\sem{\Delta}})=\psi.
\]
Hence
\[
(\varphi'\tensor\psi')\circ((u\times v)\times\id)
=
\varphi\tensor\psi,
\]
as required.\qedhere
\end{proof}

\begin{prop}\label{prop:symm-monoidal}
The category $\Arch_Z$, equipped with the tensor product of \autoref{def:tensor-arch} and unit object $\emptyset$, is a symmetric monoidal category.
\end{prop}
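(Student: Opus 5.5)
The plan is to avoid checking coherence by hand and instead transport structure along a functor. Define the embedding functor $J\colon \Ctx\to\Arch_Z$-style assignment that sends a morphism $f\colon\sem{\Gamma}\to\sem{\Delta}$ of $\Ctx$ to the class of the parameter-free block $(1,f\circ\pi_2)\colon\Gamma\to\Delta$. The structural isomorphisms of $\Arch_Z$ will all be images of canonical product isomorphisms under this assignment. For $\Gamma,\Delta,\Theta$, the contexts $(\Gamma\tensor\Delta)\tensor\Theta$ and $\Gamma\tensor(\Delta\tensor\Theta)$ are literally the same list. So are $\emptyset\tensor\Gamma$, $\Gamma$, and $\Gamma\tensor\emptyset$. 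Hence the associator and both unitors can be taken to be identities. The symmetry $\gamma_{\Gamma,\Delta}\colon\Gamma\tensor\Delta\to\Delta\tensor\Gamma$ will be the class of $(1,\,\mu^{-1}_{\Delta,\Gamma}\circ s\circ\mu_{\Gamma,\Delta}\circ\pi_2)$, where $s$ is the product swap and $\mu$ is the isomorphism of \autoref{lem:carrier-tensor}.

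The steps, in order, are as follows.

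\begin{enumerate}
\item Show that $\tensor$ is a bifunctor. It is well defined on classes by \autoref{lem:tensor-well-defined}, so it remains to check identities and interchange. For identities, $\id_\Gamma\tensor\id_\Delta$ is represented by $(1\times1,\ldots)$, which is equivalent via $1\times1\cong1$ to $(1,\pi_2)$ on $\Gamma\tensor\Delta$. For interchange, $([Q',\psi']\circ[P',\varphi'])\tensor([Q,\psi]\circ[P,\varphi])$ and $([Q',\psi']\tensor[Q,\psi])\circ([P',\varphi']\tensor[P,\varphi])$ have parameter objects $(Q'\times P')\times(Q\times P)$ and $(Q'\times Q)\times(P'\times P)$. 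These are related by the canonical middle-four interchange isomorphism, and the two underlying maps agree after precomposition with it.
\item Since the associator and unitors are identities, show that $\tensor$ is strictly associative and unital on morphisms up to $\sim$. For example, $(\varphi\tensor\psi)\tensor\chi$ and $\varphi\tensor(\psi\tensor\chi)$ are related by the associativity isomorphism of parameter objects. Likewise, $\id_\emptyset\tensor\varphi$ corresponds to $(1\times P,\ldots)\sim(P,\varphi)$ via $\lambda_P$. With this, the pentagon and triangle identities hold trivially, giving a strict monoidal category.
\item Show that $\gamma$ is natural, self-inverse ($\gamma_{\Delta,\Gamma}\circ\gamma_{\Gamma,\Delta}=\id$), and satisfies the hexagon. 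For self-inverse and the hexagon, both sides are parameter-free blocks whose underlying maps are canonical maps between finite products of the $A_i$. They are therefore equal, because they have the same projections. Naturality reduces to $(\psi\tensor\varphi)\circ(\text{swap})=(\text{swap})\circ(\varphi\tensor\psi)$ up to the swap isomorphism $P\times Q\cong Q\times P$ on parameters. The parameter objects then become $(1\times(Q\times P))$ versus $((P\times Q)\times 1)$, identified by the canonical isomorphism.
\end{enumerate}

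The one uniform principle I will use throughout is the following. Two composites built from a block-map $\varphi$ (respectively $\psi,\chi$) and canonical product isomorphisms, each applying every block exactly once to the correct factors, are equal whenever they agree after postcomposition with every projection. This is the argument already used in the associativity part of \autoref{prop:category}.

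I expect the main obstacle to be the interchange law and naturality of $\gamma$. There the equality holds only modulo a nontrivial reparametrisation, so the witnessing isomorphism $u$ of \autoref{def:block-iso} must be exhibited explicitly. One must then check $\varphi'\circ(u\times\id)=\varphi$, and this check mixes the parameter reshuffling with the carrier isomorphisms $\mu$. I would handle it by writing both sides as maps out of the fully flattened product $Q'\times P'\times Q\times P\times\sem{\Gamma}\times\sem{\Delta}$ and comparing the projections onto the $\sem{\Gamma''}$ and $\sem{\Delta''}$ factors. Each projection is a composite of the form $\psi'\circ(\id\times\varphi')$ or $\psi\circ(\id\times\varphi)$ precomposed with a projection-built map. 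By the universal property, these agree on both sides.
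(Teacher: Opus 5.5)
Your proof is correct, but it takes a more explicit route than the paper's.

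\textbf{What the paper does.} It declares the associator, the unitors and the symmetry to be parameter-free blocks (parameter object $1$). Their underlying maps are the canonical product isomorphisms, transported along the carrier isomorphisms of \autoref{lem:carrier-tensor}. It then settles the pentagon, triangle and hexagon by citing Mac Lane's coherence theorem.

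\textbf{What you do differently.} You notice that list concatenation is strictly associative and unital on objects, so the associator and unitors can be identities. This agrees with the paper: the transported product associator is a canonical endomorphism of $\sem{\Gamma\tensor\Delta\tensor\Theta}$ with the same projections as the identity, so its class is $\id$. You then check directly several things the paper's proof never mentions:
\begin{itemize}
\item bifunctoriality of $\tensor$, namely the identity law and the interchange law via the middle-four isomorphism on parameters;
\item strict associativity and unitality of $\tensor$ on morphisms up to $\sim$;
\item naturality of $\gamma$.
\end{itemize}
After these checks the pentagon and triangle are trivial, and only the self-inverse law and one hexagon remain. You prove these by comparing projections.

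\textbf{What each approach buys.} Yours is a complete verification. The paper's appeal to coherence is a shortcut: it silently relies on parameter-free blocks composing and tensoring like their underlying maps in the cartesian monoidal category $\Ctx$, where the identities already hold. Your projection argument is exactly what makes that reliance explicit. The paper's version gains only brevity.

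\textbf{Two small corrections.}
\begin{itemize}
\item Your opening sentence promises to ``transport structure along a functor'', but your steps actually verify the axioms by hand. That is fine, but the framing is misleading.
\item In the naturality step you have the parameter objects slightly wrong. In \autoref{def:arch} the later block's parameter is written first, so $\gamma_{\Gamma',\Delta'}\circ(f\tensor g)$ has parameter object $1\times(P\times Q)$ and $(g\tensor f)\circ\gamma_{\Gamma,\Delta}$ has $(Q\times P)\times 1$. You wrote $1\times(Q\times P)$ and $(P\times Q)\times 1$. The witnessing isomorphism is still the swap combined with unit isomorphisms, so the argument is unaffected.
\end{itemize}
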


\begin{proof}
The associator, left and right unitors, and symmetry of $\Arch_Z$ are induced by the corresponding canonical isomorphisms of finite products in $\Ctx$, transported along the carrier isomorphisms of \autoref{lem:carrier-tensor}. More precisely, each of these structural morphisms is represented by a parametrised block with parameter object $1$ and underlying map the corresponding canonical product isomorphism.

It remains to verify the coherence axioms. Since all structural morphisms arise from canonical reassociations, symmetries, and unit isomorphisms of finite products, the pentagon, triangle, and hexagon identities in $\Arch_Z$ reduce to the usual coherence identities for monoidal categories. These hold by Mac Lane's coherence theorem; see \cite[Chapter~VII]{MacLane1998}. Therefore $\Arch_Z$ is a symmetric monoidal category.\qedhere
\end{proof}

\subsection{Coherent zone disciplines and extracted subexponential data}

The category $\Arch_Z$ encodes typed contexts and parametrised blocks, together with their sequential and parallel composition. 
This, however, is still not enough to recover a logical discipline. 
Indeed, because the ambient category $\Ctx$ has finite products, terminal and diagonal maps are always available at the level of carriers. 
If all such maps were admitted indiscriminately as structural, then the intended resource sensitivity would disappear. 
What must therefore be specified is not the mere existence of structural maps in $\Ctx$, but rather which among them are to be regarded as \emph{architecturally licensed}.

The role of a coherent zone discipline is precisely to supply this information. 
It enriches the architectural category with three kinds of data: 
a preorder of zone refinement, coercions between refined zones, and designated structural families expressing where discard and duplication are allowed. 
This is the point at which the passage
\[
\text{architecture}\;\longrightarrow\;\text{categorisation}\;\longrightarrow\;\text{subexponential structure}
\]
becomes explicit.

We begin with a piece of notation that allows us to speak precisely about naturality in the carrier object.

\begin{defi}[The one-zone embedding]\label{def:one-zone-embedding}
For each zone $z\in Z$, define a mapping
\[
J_z\colon \Ctx\longrightarrow \Arch_Z
\]
on objects by
\[
J_z(A):=\ctx{z,A},
\]
and on morphisms by
\[
J_z(f):=[\![1,f\circ \pi_2]\!]\colon \ctx{z,A}\longrightarrow \ctx{z,B}
\]
for every morphism $f\colon A\to B$ in $\Ctx$, where
\[
\pi_2\colon 1\times A\to A
\]
is the second projection.
\end{defi}

\begin{lem}\label{lem:Jz-functor}
For each $z\in Z$, the assignment $J_z$ of \autoref{def:one-zone-embedding} defines a functor
\[
J_z\colon \Ctx\to\Arch_Z.
\]
\end{lem}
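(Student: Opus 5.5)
We must check two things: $J_z$ preserves identities and preserves composition, up to the equivalence $\sim$ of \autoref{def:block-iso}. That $J_z$ is well typed is immediate. Indeed, $\sem{\ctx{z,A}}=A$ (a one-element product, which we identify with $A$ itself), so $f\circ\pi_2\colon 1\times A\to B$ is a parametrised resource block $\ctx{z,A}\to\ctx{z,B}$ in the sense of \autoref{def:block}.

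For identities, $J_z(\id_A)$ is represented by $(1,\id_A\circ\pi_2)=(1,\pi_2)$. By \autoref{def:arch} this is exactly the representative chosen for the identity on $\ctx{z,A}$, so the two agree on the nose as classes.

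For composition, take $f\colon A\to B$ and $g\colon B\to C$. By \autoref{def:arch}, $J_z(g)\circ J_z(f)$ is represented by
\[
(1\times 1,\; (g\circ\pi_2)\circ(\id_1\times(f\circ\pi_2))\circ\alpha^{-1}_{1,1,A}),
\]
while $J_z(g\circ f)$ is represented by $(1,g\circ f\circ\pi_2)$. The candidate isomorphism of parametrised blocks is the unique map $u\colon 1\times 1\to 1$; it is an isomorphism because $1$ is terminal and $1\times 1$ is again terminal.

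It then remains to verify
\[
g\circ f\circ\pi_2\circ(u\times\id_A)=(g\circ\pi_2)\circ(\id_1\times(f\circ\pi_2))\circ\alpha^{-1}_{1,1,A}
\]
as maps $(1\times1)\times A\to C$. Both sides factor through the projection $p\colon(1\times 1)\times A\to A$ followed by $g\circ f$, so it suffices to compute each side on that projection. On the left, $\pi_2\circ(u\times\id_A)=p$ by naturality of the second projection. On the right, $\pi_2\circ(\id_1\times(f\circ\pi_2))=f\circ\pi_2\circ\pi_2$ by the definition of the product of maps. Precomposing with $\alpha^{-1}_{1,1,A}$ gives $f\circ p$, since $\pi_2\circ\pi_2\circ\alpha^{-1}_{1,1,A}$ is the projection onto the $A$ factor by the defining property of the associator. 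Hence both sides equal $g\circ f\circ p$.

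The only point requiring care is this last bookkeeping of canonical product maps, specifically the identity $\pi_2\circ\pi_2\circ\alpha^{-1}=p$. I would discharge it by the same principle used in the proof of \autoref{prop:category}: canonical maps between finite products are determined by their projections. Since composition in $\Arch_Z$ is well defined on classes (\autoref{prop:category}), these representative-level equalities suffice, and $J_z$ is a functor.
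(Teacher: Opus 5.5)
Your proof is correct and follows the same route as the paper. Identities hold on the nose because $J_z(\id_A)$ is literally represented by $(1,\pi_2)$, and composition holds up to the reparametrisation $1\times 1\cong 1$. Your version is more explicit than the paper's, which only asserts that the composite representative is ``canonically equal'' to $g\circ f\circ\pi_2$. You instead name the witnessing block isomorphism $u\colon 1\times 1\to 1$ and check $g\circ f\circ\pi_2\circ(u\times\id_A)=(g\circ\pi_2)\circ(\id_1\times(f\circ\pi_2))\circ\alpha^{-1}_{1,1,A}$ via projections.
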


\begin{proof}
Let $A\in\Ob(\Ctx)$. Then
\[
J_z(\id_A)=[\![1,\id_A\circ\pi_2]\!].
\]
Since
\[
\id_A\circ\pi_2=\pi_2\colon 1\times A\to A,
\]
this is exactly the identity morphism on $\ctx{z,A}$ in $\Arch_Z$, by \autoref{def:arch}. Hence
\[
J_z(\id_A)=\id_{J_z(A)}.
\]

Now let
\[
f\colon A\to B,
\qquad
g\colon B\to C
\]
be morphisms in $\Ctx$. By definition,
\[
J_z(f)=[\![1,f\circ\pi_2]\!],
\qquad
J_z(g)=[\![1,g\circ\pi_2]\!].
\]
Their composite in $\Arch_Z$ is represented by
\[
(1\times 1,\; (g\circ\pi_2)\circ(\id_1\times(f\circ\pi_2))\circ\alpha^{-1}),
\]
where
\[
\alpha\colon 1\times(1\times A)\xrightarrow{\cong}(1\times 1)\times A
\]
is the canonical associativity isomorphism. 
Using the canonical identifications
\[
1\times 1\cong 1
\qquad\text{and}\qquad
1\times A\cong A,
\]
the displayed morphism is canonically equal to
\[
g\circ f\circ \pi_2\colon 1\times A\to C.
\]
Therefore
\[
J_z(g)\circ J_z(f)=[\![1,(g\circ f)\circ\pi_2]\!]=J_z(g\circ f).
\]
Thus $J_z$ preserves identities and composition, and is therefore a functor.\qedhere
\end{proof}

We may now formulate the structural data that refine the bare category $\Arch_Z$.

\begin{defi}[Coherent zone discipline]\label{def:discipline}
A \emph{coherent zone discipline} $\Lic$ on $\Arch_Z$ consists of the following data.

\begin{enumerate}
\item A preorder $\preceq$ on $Z$.

\item For each pair $z\preceq z'$ and each object $A\in\Ob(\Ctx)$, a morphism
\[
\chi_{z,z',A}\colon J_z(A)\longrightarrow J_{z'}(A),
\]
called a \emph{zone coercion}, such that:

\begin{enumerate}
\item for every morphism $f\colon A\to B$ in $\Ctx$, the square
\[
\begin{tikzcd}[column sep=large,row sep=large]
J_z(A) \arrow[r,"\chi_{z,z',A}"] \arrow[d,"J_z(f)"'] &
J_{z'}(A) \arrow[d,"J_{z'}(f)"] \\
J_z(B) \arrow[r,"\chi_{z,z',B}"'] &
J_{z'}(B)
\end{tikzcd}
\]
commutes;

\item for every $A\in\Ob(\Ctx)$,
\[
\chi_{z,z,A}=\id_{J_z(A)};
\]

\item whenever $z\preceq z'\preceq z''$, one has
\[
\chi_{z',z'',A}\circ\chi_{z,z',A}=\chi_{z,z'',A}
\]
for every $A\in\Ob(\Ctx)$.
\end{enumerate}

\item Two upward-closed subsets
\[
\W_{\Lic},\C_{\Lic}\subseteq Z.
\]

\item For every $z\in\W_{\Lic}$ and every $A\in\Ob(\Ctx)$, a morphism
\[
\omega_{z,A}\colon J_z(A)\longrightarrow \emptyset,
\]
called a \emph{licensed discard}, such that:

\begin{enumerate}
\item for every morphism $f\colon A\to B$ in $\Ctx$,
\[
\omega_{z,B}\circ J_z(f)=\omega_{z,A};
\]

\item whenever $z\in\W_{\Lic}$ and $z\preceq z'$, the diagram
\[
\begin{tikzcd}[column sep=large,row sep=large]
J_z(A)
  \arrow[r,"\chi_{z,z',A}"]
  \arrow[dr,"\omega_{z,A}"']
&
J_{z'}(A)
  \arrow[d,"\omega_{z',A}"]
\\
& \emptyset
\end{tikzcd}
\]
commutes for every $A\in\Ob(\Ctx)$.
\end{enumerate}

\item For every $z\in\C_{\Lic}$ and every $A\in\Ob(\Ctx)$, a morphism
\[
\delta_{z,A}\colon J_z(A)\longrightarrow J_z(A)\tensor J_z(A),
\]
called a \emph{licensed diagonal}, such that:

\begin{enumerate}
\item for every morphism $f\colon A\to B$ in $\Ctx$,
\[
\bigl(J_z(f)\tensor J_z(f)\bigr)\circ\delta_{z,A}
=
\delta_{z,B}\circ J_z(f);
\]

\item whenever $z\in\C_{\Lic}$ and $z\preceq z'$, the diagram
\[
\begin{tikzcd}[column sep=large,row sep=large]
J_z(A)
  \arrow[r,"\chi_{z,z',A}"]
  \arrow[d,"\delta_{z,A}"']
&
J_{z'}(A)
  \arrow[d,"\delta_{z',A}"]
\\
J_z(A)\tensor J_z(A)
  \arrow[r,"\chi_{z,z',A}\tensor\chi_{z,z',A}"']
&
J_{z'}(A)\tensor J_{z'}(A)
\end{tikzcd}
\]
commutes for every $A\in\Ob(\Ctx)$.
\end{enumerate}
\end{enumerate}
\end{defi}

\begin{rem}\label{rem:discipline-role}
The coherence conditions say that structural behaviour is stable under refinement of zones. 
If a zone $z$ admits discard or duplication and $z\preceq z'$, then the corresponding structural operation at $z'$ is compatible with the coercion from $z$ to $z'$. 
This is the categorical precursor of the upward closure conditions in a subexponential signature.
\end{rem}

The preceding remark suggests that the structural information carried by a coherent discipline should be read not merely from the names of the designated subsets $\W_{\Lic}$ and $\C_{\Lic}$, but from the actual family of structural morphisms licensed by the architecture. We therefore reconstruct the weakening and contraction zones directly from \autoref{def:licensed-family}.
\begin{defi}[Licensed family]\label{def:licensed-family}
Let $\Lic$ be a coherent zone discipline on $\Arch_Z$. The licensed family $\Str(\Lic)$ is the collection of all structural morphisms permitted by $\Lic$.
\end{defi}

\begin{defi}[Structural classes reconstructed from the licensed family]\label{def:reconstructed-classes}
Let $\Lic$ be a coherent zone discipline on $\Arch_Z$. Define
\[
\W^{\disc}_{\Lic}
:=
\left\{
z\in Z \;\middle|\;
\forall A\in\Ob(\Ctx),\ \Str(\Lic)\text{ contains a licensed discard }
\omega_{z,A}\colon J_z(A)\to \emptyset
\right\},
\]
and
\[
\C^{\diag}_{\Lic}
:=
\left\{
z\in Z \;\middle|\;
\forall A\in\Ob(\Ctx),\ \Str(\Lic)\text{ contains a licensed diagonal }
\delta_{z,A}\colon J_z(A)\to J_z(A)\tensor J_z(A)
\right\}.
\]
\end{defi}

\begin{defi}[Extracted subexponential signature]\label{def:extracted-signature}
Let $(\Arch_Z,\Lic)$ be a coherently disciplined architectural category. Its \emph{extracted subexponential signature} is the quadruple
\[
\Sig_{\Lic}:=(Z,\preceq,\W^{\disc}_{\Lic},\C^{\diag}_{\Lic}).
\]
\end{defi}

The next theorem shows that the reconstructed classes coincide exactly with the originally designated weakening and contraction zones. In this sense, the extracted signature is not an arbitrary repackaging of the discipline, but a faithful reading of the structural permissions carried by the licensed family.

\begin{thm}[Faithful reconstruction of the structural signature]\label{thm:recovery}
Let $(\Arch_Z,\Lic)$ be a coherently disciplined architectural category. Then
\[
\W^{\disc}_{\Lic}=\W_{\Lic}
\qquad\text{and}\qquad
\C^{\diag}_{\Lic}=\C_{\Lic}.
\]
Consequently,
\[
\Sig_{\Lic}=(Z,\preceq,\W_{\Lic},\C_{\Lic}),
\]
and in particular $\Sig_{\Lic}$ is a subexponential signature.
\end{thm}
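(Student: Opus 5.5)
The plan is to prove the two set equalities by double inclusion. Throughout, $\Str(\Lic)$ is read according to \autoref{def:licensed-family}: it consists of the coercions $\chi_{z,z',A}$, the discards $\omega_{z,A}$ for $z\in\W_{\Lic}$, and the diagonals $\delta_{z,A}$ for $z\in\C_{\Lic}$. Nothing else counts as licensed discard or licensed diagonal, even though $\Ctx$ has terminal and diagonal maps on carriers. The first step is to fix this reading explicitly. The point of the reconstruction is that membership in $\Str(\Lic)$ is tied to the data of \autoref{def:discipline}, not to the mere existence of morphisms $J_z(A)\to\emptyset$ in $\Arch_Z$.

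For the inclusion $\W_{\Lic}\subseteq\W^{\disc}_{\Lic}$, take $z\in\W_{\Lic}$. Clause (4) of \autoref{def:discipline} supplies $\omega_{z,A}$ for every $A\in\Ob(\Ctx)$, and these lie in $\Str(\Lic)$ by definition. Hence $z$ satisfies the universally quantified condition defining $\W^{\disc}_{\Lic}$.

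For the converse, suppose $z\in\W^{\disc}_{\Lic}$. Then for every $A$, and in particular for any single object such as the terminal object $1$ (so the quantifier is never vacuous), $\Str(\Lic)$ contains a licensed discard $\omega_{z,A}$. By the reading of $\Str(\Lic)$ fixed above, a licensed discard indexed by $z$ exists only when $z\in\W_{\Lic}$, so $z\in\W_{\Lic}$. The argument for $\C^{\diag}_{\Lic}=\C_{\Lic}$ is identical, using clause (5) and the diagonals $\delta_{z,A}$.

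With the two equalities established, $\Sig_{\Lic}=(Z,\preceq,\W_{\Lic},\C_{\Lic})$ follows by substitution into \autoref{def:extracted-signature}. To conclude that $\Sig_{\Lic}$ is a subexponential signature, I will check the standard requirements: $\preceq$ is a preorder by clause (1), and $\W_{\Lic}$ and $\C_{\Lic}$ are upward closed by clause (3).

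Remark~\ref{rem:discipline-role} explains why this upward closure is well behaved. The coherence triangles (4b) and the squares (5b) show that the structural maps at $z'$ are compatible with those at $z$ whenever $z\preceq z'$. This compatibility is not needed for the set equalities themselves, but I will mention it to justify that the signature faithfully records how the structural operations behave under refinement.

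I expect the only genuine obstacle to be the converse inclusion. \autoref{def:licensed-family} is informal, so the proof must state precisely that $\Str(\Lic)$ is generated by the designated families and contains no other discards or diagonals; otherwise $\W^{\disc}_{\Lic}$ could be larger than $\W_{\Lic}$. I will make this explicit as the operative definition at the start of the proof. After that, both inclusions are definitional unwinding.
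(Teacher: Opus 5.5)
Your proposal is correct and follows essentially the same route as the paper: double inclusion, where the nontrivial direction $\W^{\disc}_{\Lic}\subseteq\W_{\Lic}$ (and likewise for $\C$) rests on reading $\Str(\Lic)$ as exactly the union of the designated families. The paper spells this step out by excluding diagonals via their codomain and recovering $z=u$ from $\ctx{z,A}=\ctx{u,B}$, which your ``indexed by $z$'' reading absorbs; both then use clauses (1) and (3) for the final claim. Your observation that $\Ob(\Ctx)\neq\emptyset$ (because the terminal object $1$ exists) makes explicit a point the paper uses tacitly, and including the coercions $\chi_{z,z',A}$ in $\Str(\Lic)$ is harmless, since their codomain $\ctx{z',A}$ is neither $\emptyset$ nor a two-variable context.
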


\begin{proof}
We first prove that
\[
\W^{\disc}_{\Lic}=\W_{\Lic}.
\]

Assume that
\[
z\in \W^{\disc}_{\Lic}.
\]
By \autoref{def:reconstructed-classes}, this means that for every object $A\in\Ob(\Ctx)$, the licensed structural family $\Str(\Lic)$ contains a morphism
\[
\omega_{z,A}\colon J_z(A)\to\emptyset.
\]
By \autoref{def:licensed-family}, the family $\Str(\Lic)$ is the union
\[
\Str(\Lic)=
\left\{\omega_{u,B}\;\middle|\; u\in\W_{\Lic},\ B\in\Ob(\Ctx)\right\}
\;\cup\;
\left\{\delta_{u,B}\;\middle|\; u\in\C_{\Lic},\ B\in\Ob(\Ctx)\right\}.
\]
Since the morphism
\[
\omega_{z,A}\colon J_z(A)\to\emptyset
\]
has codomain $\emptyset$, it cannot be one of the diagonals
\[
\delta_{u,B}\colon J_u(B)\to J_u(B)\tensor J_u(B),
\]
because those morphisms have codomain of the form
\[
J_u(B)\tensor J_u(B),
\]
which is not equal to $\emptyset$. Therefore
\[
\omega_{z,A}\in
\left\{\omega_{u,B}\;\middle|\; u\in\W_{\Lic},\ B\in\Ob(\Ctx)\right\}.
\]
Hence there exist $u\in\W_{\Lic}$ and $B\in\Ob(\Ctx)$ such that
\[
\omega_{z,A}=\omega_{u,B}
\]
as morphisms in $\Arch_Z$.

Equality of morphisms in a category implies equality of their domains and codomains. Thus
\[
J_z(A)=J_u(B)
\qquad\text{and}\qquad
\emptyset=\emptyset.
\]
By \autoref{def:one-zone-embedding}, one has
\[
J_z(A)=\ctx{z,A},
\qquad
J_u(B)=\ctx{u,B}.
\]
Therefore
\[
\ctx{z,A}=\ctx{u,B}.
\]
Since one-variable contexts are equal if and only if both their zone labels and their carrier objects are equal, it follows that
\[
z=u
\qquad\text{and}\qquad
A=B.
\]
In particular,
\[
z=u\in\W_{\Lic}.
\]
This proves that
\[
\W^{\disc}_{\Lic}\subseteq \W_{\Lic}.
\]

Conversely, assume that
\[
z\in\W_{\Lic}.
\]
We must prove that
\[
z\in \W^{\disc}_{\Lic}.
\]
Let $A\in\Ob(\Ctx)$ be arbitrary. By \autoref{def:discipline}, since $z\in\W_{\Lic}$, the data of the discipline include a licensed discard
\[
\omega_{z,A}\colon J_z(A)\to\emptyset.
\]
By \autoref{def:licensed-family}, every such licensed discard belongs to $\Str(\Lic)$. Since $A$ was arbitrary, we conclude that for every $A\in\Ob(\Ctx)$, the family $\Str(\Lic)$ contains a licensed discard
\[
\omega_{z,A}\colon J_z(A)\to\emptyset.
\]
Therefore, by \autoref{def:reconstructed-classes},
\[
z\in \W^{\disc}_{\Lic}.
\]
Hence
\[
\W_{\Lic}\subseteq \W^{\disc}_{\Lic}.
\]

We have proved both inclusions, so
\[
\W^{\disc}_{\Lic}=\W_{\Lic}.
\]

We now prove that
\[
\C^{\diag}_{\Lic}=\C_{\Lic}.
\]

Assume that
\[
z\in \C^{\diag}_{\Lic}.
\]
By \autoref{def:reconstructed-classes}, this means that for every object $A\in\Ob(\Ctx)$, the family $\Str(\Lic)$ contains a licensed diagonal
\[
\delta_{z,A}\colon J_z(A)\to J_z(A)\tensor J_z(A).
\]
Again by \autoref{def:licensed-family},
\[
\Str(\Lic)=
\left\{\omega_{u,B}\;\middle|\; u\in\W_{\Lic},\ B\in\Ob(\Ctx)\right\}
\;\cup\;
\left\{\delta_{u,B}\;\middle|\; u\in\C_{\Lic},\ B\in\Ob(\Ctx)\right\}.
\]
Since the morphism
\[
\delta_{z,A}\colon J_z(A)\to J_z(A)\tensor J_z(A)
\]
has codomain of the form
\[
J_z(A)\tensor J_z(A),
\]
it cannot be one of the discards
\[
\omega_{u,B}\colon J_u(B)\to\emptyset,
\]
because those morphisms have codomain $\emptyset$. Therefore
\[
\delta_{z,A}\in
\left\{\delta_{u,B}\;\middle|\; u\in\C_{\Lic},\ B\in\Ob(\Ctx)\right\}.
\]
Hence there exist $u\in\C_{\Lic}$ and $B\in\Ob(\Ctx)$ such that
\[
\delta_{z,A}=\delta_{u,B}
\]
as morphisms in $\Arch_Z$.

Equality of morphisms implies equality of domains and codomains. Therefore
\[
J_z(A)=J_u(B)
\qquad\text{and}\qquad
J_z(A)\tensor J_z(A)=J_u(B)\tensor J_u(B).
\]
From the equality of domains we already obtain
\[
J_z(A)=J_u(B).
\]
Using again \autoref{def:one-zone-embedding},
\[
J_z(A)=\ctx{z,A},
\qquad
J_u(B)=\ctx{u,B},
\]
hence
\[
\ctx{z,A}=\ctx{u,B}.
\]
Therefore
\[
z=u
\qquad\text{and}\qquad
A=B.
\]
In particular,
\[
z=u\in\C_{\Lic}.
\]
This proves that
\[
\C^{\diag}_{\Lic}\subseteq \C_{\Lic}.
\]

Conversely, assume that
\[
z\in\C_{\Lic}.
\]
We must prove that
\[
z\in \C^{\diag}_{\Lic}.
\]
Let $A\in\Ob(\Ctx)$ be arbitrary. By \autoref{def:discipline}, since $z\in\C_{\Lic}$, the data of the discipline include a licensed diagonal
\[
\delta_{z,A}\colon J_z(A)\to J_z(A)\tensor J_z(A).
\]
By \autoref{def:licensed-family}, every such licensed diagonal belongs to $\Str(\Lic)$. Since $A$ was arbitrary, it follows that for every $A\in\Ob(\Ctx)$, the family $\Str(\Lic)$ contains a licensed diagonal
\[
\delta_{z,A}\colon J_z(A)\to J_z(A)\tensor J_z(A).
\]
Hence, by \autoref{def:reconstructed-classes},
\[
z\in \C^{\diag}_{\Lic}.
\]
Therefore
\[
\C_{\Lic}\subseteq \C^{\diag}_{\Lic}.
\]

We have proved both inclusions, so
\[
\C^{\diag}_{\Lic}=\C_{\Lic}.
\]

It remains to justify the final claim. Since $\preceq$ is a preorder on $Z$ by \autoref{def:discipline}, and since
\[
\W^{\disc}_{\Lic}=\W_{\Lic},
\qquad
\C^{\diag}_{\Lic}=\C_{\Lic},
\]
the sets $\W^{\disc}_{\Lic}$ and $\C^{\diag}_{\Lic}$ are upward closed because $\W_{\Lic}$ and $\C_{\Lic}$ are upward closed by \autoref{def:discipline}. Therefore the quadruple
\[
\Sig_{\Lic}=(Z,\preceq,\W^{\disc}_{\Lic},\C^{\diag}_{\Lic})
\]
is a subexponential signature. Using the equalities already established, this is the same as
\[
\Sig_{\Lic}=(Z,\preceq,\W_{\Lic},\C_{\Lic}),
\]
as claimed.\qedhere
\end{proof}

Thus the signature is not imposed on the architecture from outside: it is reconstructed from
the structurally licensed families already present in the disciplined category.
\begin{cor}\label{cor:no-loss}
Let $(\Arch_Z,\Lic)$ be a coherently disciplined architectural category, and let
\[
\Sig_{\Lic}=(Z,\preceq,\W^{\disc}_{\Lic},\C^{\diag}_{\Lic})
\]
be the extracted signature of \autoref{def:extracted-signature}. Then the weakening and contraction zones of
\[
\Sig_{\Lic}
\]
are exactly the zones in which the discipline $\Lic$ designates discard and diagonal families.
\end{cor}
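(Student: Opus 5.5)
This corollary is a direct restatement of \autoref{thm:recovery}, so the plan is to unfold definitions and invoke that theorem rather than argue anew. First, read off from \autoref{def:extracted-signature} that the weakening zones of $\Sig_{\Lic}$ are, by definition, the set $\W^{\disc}_{\Lic}$ and the contraction zones are $\C^{\diag}_{\Lic}$. Second, make precise what ``the zones in which $\Lic$ designates discard and diagonal families'' means. By clauses (3)--(5) of \autoref{def:discipline}, a discard family $(\omega_{z,A})_{A\in\Ob(\Ctx)}$ is supplied exactly for $z\in\W_{\Lic}$, and a diagonal family $(\delta_{z,A})_{A}$ is supplied exactly for $z\in\C_{\Lic}$. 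The designated zones are therefore $\W_{\Lic}$ and $\C_{\Lic}$ respectively.

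Third, apply \autoref{thm:recovery}, which gives $\W^{\disc}_{\Lic}=\W_{\Lic}$ and $\C^{\diag}_{\Lic}=\C_{\Lic}$. Chaining these equalities with the first two steps yields the claim. I would also note that the preorder component is unchanged, so $\Sig_{\Lic}=(Z,\preceq,\W_{\Lic},\C_{\Lic})$ carries no extra or missing structural zone. This is the sense of ``no loss'' in the label.

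The only point requiring care is the second step. The phrase ``designates discard and diagonal families'' must be identified with membership in $\W_{\Lic}$ and $\C_{\Lic}$, and not with, say, the mere existence in $\Arch_Z$ of some morphism $J_z(A)\to\emptyset$. Such morphisms always exist, since $\Ctx$ has finite products. I would make explicit that only the families given as data of $\Lic$ count, which matches the description of $\Str(\Lic)$ used in the proof of \autoref{thm:recovery}. Once that reading is fixed, the corollary follows immediately.
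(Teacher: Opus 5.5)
Your proposal is correct and follows the paper's proof. The paper reads the weakening and contraction zones of $\Sig_{\Lic}$ as $\W^{\disc}_{\Lic}$ and $\C^{\diag}_{\Lic}$ via \autoref{def:extracted-signature}, uses \autoref{thm:recovery} to replace them by $\W_{\Lic}$ and $\C_{\Lic}$, and identifies the latter, by definition of the discipline, as the zones where discard and diagonal families are supplied; your added caution that only the families given as data of $\Lic$ count (not arbitrary maps $J_z(A)\to\emptyset$, which always exist) is the reading the paper relies on implicitly.
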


\begin{proof}
By \autoref{def:extracted-signature}, the weakening and contraction zones of $\Sig_{\Lic}$ are
\[
\W^{\disc}_{\Lic}
\qquad\text{and}\qquad
\C^{\diag}_{\Lic},
\]
respectively. By \autoref{thm:recovery},
\[
\W^{\disc}_{\Lic}=\W_{\Lic}
\qquad\text{and}\qquad
\C^{\diag}_{\Lic}=\C_{\Lic}.
\]
By definition, the sets $\W_{\Lic}$ and $\C_{\Lic}$ are exactly the zones in which the discipline
$\Lic$ supplies licensed discard and diagonal families. Therefore the extracted signature retains
exactly the structural zone information carried by the discipline.\qedhere
\end{proof}
The previous results complete the passage from the architectural category to a subexponential signature. 
What remains is to show how this extracted signature governs the proof theory. 
That step belongs to the next section, where the tensorial calculus $\TZ_{\Sig}$ is defined and the soundness of its structural rules is established with respect to the disciplined architecture.

\section{The extracted proof theory}\label{sec:proof}

The previous section showed how a coherent zone discipline on the architectural category determines a subexponential signature
\[
\Sig_{\Lic}=(Z,\preceq,\W^{\disc}_{\Lic},\C^{\diag}_{\Lic}).
\]
By \autoref{thm:recovery}, one may equivalently write
\[
\Sig_{\Lic}=(Z,\preceq,\W_{\Lic},\C_{\Lic}).
\]
We now pass from this extracted signature to a sequent calculus. 
This is the syntactic step in the chain
\[
\text{architecture}\;\longrightarrow\;\text{categorisation}\;\longrightarrow\;\text{subexponential signature}\;\longrightarrow\;\text{proof theory}.
\]
The guiding principle is that the proof system should remember exactly the structural permissions extracted from the architecture and nothing more.

The fragment considered here is intentionally modest. 
It contains atoms, the tensor unit, tensor product, and the zone-dependent structural rules of weakening and contraction. 
This is the part of the logic directly forced by the architectural data developed so far. 
In particular, no implication is introduced at this stage, since its correct treatment would require additional categorical structure not yet available in the present framework.The tensorial system considered here should be read as a deliberately restricted,
zone-indexed fragment of intuitionistic linear proof theory; for a classical term-calculus
reference point on intuitionistic linear logic, see \cite{BBPH93}.

\begin{defi}[Subexponential signature]\label{def:signature}
A \emph{subexponential signature} is a quadruple
\[
\Sig=(Z,\preceq,\W,\C),
\]
where $Z$ is a set of zones, $\preceq$ is a preorder on $Z$, and $\W,\C\subseteq Z$ are upward-closed subsets.
The elements of $\W$ are the zones in which weakening is allowed, and the elements of $\C$ are the zones in which contraction is allowed.
\end{defi}

Although the signature of interest for this article is the extracted signature $\Sig_{\Lic}$ of \autoref{def:extracted-signature}, the proof system is most conveniently stated for an arbitrary signature.

\begin{defi}[Tensorial formulas and zone contexts]\label{def:formulas}
Fix a set $\mathsf{At}$ of atomic formulas. 
The set of \emph{tensorial formulas} is inductively generated by
\[
A ::= p \mid \unit \mid A\tensor B
\qquad (p\in\mathsf{At}).
\]

A \emph{zone context} is a finite multiset of expressions
\[
\zone{z}{A},
\]
where $z\in Z$ and $A$ is a tensorial formula. 
If $\Gamma$ and $\Delta$ are zone contexts, their concatenation is written $\Gamma,\Delta$. 
Contexts are taken modulo permutation.
\end{defi}

The preorder $\preceq$ is retained because it is part of the subexponential data extracted from the architecture and will be relevant for richer fragments. 
In the tensorial system defined below, however, only the sets $\W$ and $\C$ occur explicitly.

\begin{defi}[The extracted zone calculus $\TZ_{\Sig}$]\label{def:tz}
The calculus $\TZ_{\Sig}$ has judgments of the form
\[
\Gamma\vdash A,
\]
where $\Gamma$ is a zone context and $A$ is a tensorial formula. 
Its inference rules are the following:

\[
\frac{}{\zone{z}{A}\vdash A}\;\mathrm{ax}
\qquad
\frac{}{\emptyset\vdash \unit}\;\unit\mathrm{R}
\qquad
\frac{\Gamma\vdash A \qquad \Delta\vdash B}{\Gamma,\Delta\vdash A\tensor B}\;\tensor\mathrm{R}
\]

\[
\frac{\Gamma,\zone{z}{A},\zone{z}{B}\vdash C}{\Gamma,\zone{z}{A\tensor B}\vdash C}\;\tensor\mathrm{L}
\qquad
\frac{\Gamma\vdash C}{\Gamma,\zone{z}{\unit}\vdash C}\;\unit\mathrm{L}
\]

\[
\frac{\Gamma\vdash A \qquad \Delta,\zone{z}{A}\vdash B}{\Gamma,\Delta\vdash B}\;\mathrm{cut}
\]

For each zone $z\in\W$, there is a weakening rule
\[
\frac{\Gamma\vdash B}{\Gamma,\zone{z}{A}\vdash B}\;\mathrm{W}_z,
\]
and for each zone $z\in\C$, there is a contraction rule
\[
\frac{\Gamma,\zone{z}{A},\zone{z}{A}\vdash B}{\Gamma,\zone{z}{A}\vdash B}\;\mathrm{C}_z.
\]
\end{defi}

\begin{rem}\label{rem:scope}
The choice of fragment is deliberate. 
It isolates the part of the logic that is directly forced by the structural architectural behaviour encoded in the signature: tensorial aggregation, the tensor unit, and zone-dependent weakening and contraction. 
A satisfactory extension to linear implication would require additional semantic structure, such as an internal hom compatible with the resource discipline or a linear--nonlinear adjunction; see \cite{Benton1995,Rogozin2025}. 
The present article does not assume such structure.
\end{rem}

The next proposition makes precise that the extracted proof theory depends only on the extracted signature, and not on further accidental features of the disciplined architecture.

\begin{thm}[Invariance of the extracted logic under structural equivalence]\label{thm:invariance}
Let $\Lic$ and $\Lic'$ be coherent zone disciplines on the same architectural category $\Arch_Z$.
Assume that they induce the same preorder on $Z$ and have the same licensed structural family:
\[
\preceq_{\Lic}=\preceq_{\Lic'}
\qquad\text{and}\qquad
\Str(\Lic)=\Str(\Lic').
\]
Then
\[
\W^{\disc}_{\Lic}=\W^{\disc}_{\Lic'}
\qquad\text{and}\qquad
\C^{\diag}_{\Lic}=\C^{\diag}_{\Lic'}.
\]
Consequently,
\[
\Sig_{\Lic}=\Sig_{\Lic'},
\]
and therefore
\[
\TZ_{\Sig_{\Lic}}=\TZ_{\Sig_{\Lic'}}.
\]
\end{thm}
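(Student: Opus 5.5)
The plan is to argue directly from \autoref{def:reconstructed-classes}, without passing through \autoref{thm:recovery}. The point is that the reconstructed classes are defined purely in terms of the licensed family $\Str(\Lic)$ and the fixed functors $J_z$ and object $\emptyset$ of $\Arch_Z$. The functors and the unit object depend only on the architectural category, which $\Lic$ and $\Lic'$ share.

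\emph{Weakening zones.} Fix $z\in Z$. Membership $z\in\W^{\disc}_{\Lic}$ means that for every $A\in\Ob(\Ctx)$ the family $\Str(\Lic)$ contains a morphism $J_z(A)\to\emptyset$ of the licensed-discard kind. Since $\Str(\Lic)=\Str(\Lic')$ as collections of morphisms of $\Arch_Z$, and $J_z(A)$ and $\emptyset$ are the same objects on both sides, this condition holds for $\Lic$ exactly when it holds for $\Lic'$. Hence $\W^{\disc}_{\Lic}=\W^{\disc}_{\Lic'}$.

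\emph{Contraction zones.} The same argument, with codomain $J_z(A)\tensor J_z(A)$ in place of $\emptyset$, gives $\C^{\diag}_{\Lic}=\C^{\diag}_{\Lic'}$. The tensor is fixed by \autoref{def:tensor-arch}, so this codomain is also independent of the discipline.

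\emph{Signatures and calculi.} With the preorders equal by hypothesis, \autoref{def:extracted-signature} gives $\Sig_{\Lic}=\Sig_{\Lic'}$ as quadruples. Finally, \autoref{def:tz} defines $\TZ_{\Sig}$ as a function of $\Sig$ alone: the formulas depend only on $\mathsf{At}$, the contexts on $Z$, and the rule sets on $Z$, $\W$ and $\C$. So equal signatures give literally the same calculus.

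\emph{Main obstacle.} The only delicate point is the step ``a licensed discard/diagonal belongs to $\Str$''. The phrase ``licensed discard'' might seem to refer back to the discipline's own designated data $\omega_{z,A}$, which would make the condition depend on $\Lic$ rather than only on $\Str(\Lic)$. I would handle this by reading the condition extensionally: a licensed discard at $(z,A)$ is simply a member of $\Str$ with domain $J_z(A)$ and codomain $\emptyset$, and similarly a licensed diagonal is a member with codomain $J_z(A)\tensor J_z(A)$. This is exactly how the proof of \autoref{thm:recovery} uses it, distinguishing discards from diagonals by their codomains.

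If one insists instead on the non-extensional reading, there is a fallback. By \autoref{thm:recovery}, $\W_{\Lic}$ is the set of zones whose discards all lie in $\Str(\Lic)$, so $\W_{\Lic}$ is determined by $\Str(\Lic)$; the same holds for $\C_{\Lic}$. Equality of the families then again forces equality of the sets.
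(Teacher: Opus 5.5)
Your proposal is correct and follows essentially the same route as the paper: membership in $\W^{\disc}$ and $\C^{\diag}$ is transported across $\Str(\Lic)=\Str(\Lic')$ directly from \autoref{def:reconstructed-classes}, equality of signatures then uses the shared preorder, and equality of calculi holds because the rules of $\TZ_{\Sig}$ depend only on $Z$, $\W$ and $\C$. The extensional reading of ``licensed discard/diagonal'' that you make explicit is exactly what the paper's proof uses tacitly when it says ``the same morphism belongs to $\Str(\Lic')$'', so your fallback via \autoref{thm:recovery} is not needed.
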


\begin{proof}
We first prove that
\[
\W^{\disc}_{\Lic}=\W^{\disc}_{\Lic'}.
\]

Assume that
\[
z\in \W^{\disc}_{\Lic}.
\]
By \autoref{def:reconstructed-classes}, this means that for every object $A\in\Ob(\Ctx)$, the family
\[
\Str(\Lic)
\]
contains a licensed discard
\[
\omega_{z,A}\colon J_z(A)\to \emptyset.
\]
Since
\[
\Str(\Lic)=\Str(\Lic'),
\]
the same morphism belongs to $\Str(\Lic')$. As this holds for every $A\in\Ob(\Ctx)$, the definition of $\W^{\disc}_{\Lic'}$ yields
\[
z\in \W^{\disc}_{\Lic'}.
\]
Hence
\[
\W^{\disc}_{\Lic}\subseteq \W^{\disc}_{\Lic'}.
\]

Conversely, assume that
\[
z\in \W^{\disc}_{\Lic'}.
\]
Then for every $A\in\Ob(\Ctx)$, the family
\[
\Str(\Lic')
\]
contains a licensed discard
\[
\omega_{z,A}\colon J_z(A)\to \emptyset.
\]
Again using
\[
\Str(\Lic')=\Str(\Lic),
\]
we conclude that for every $A\in\Ob(\Ctx)$, the family $\Str(\Lic)$ contains such a licensed discard. Therefore
\[
z\in \W^{\disc}_{\Lic}.
\]
Thus
\[
\W^{\disc}_{\Lic'}\subseteq \W^{\disc}_{\Lic}.
\]

We have proved both inclusions, so
\[
\W^{\disc}_{\Lic}=\W^{\disc}_{\Lic'}.
\]

We next prove that
\[
\C^{\diag}_{\Lic}=\C^{\diag}_{\Lic'}.
\]

Assume that
\[
z\in \C^{\diag}_{\Lic}.
\]
By \autoref{def:reconstructed-classes}, this means that for every object $A\in\Ob(\Ctx)$, the family
\[
\Str(\Lic)
\]
contains a licensed diagonal
\[
\delta_{z,A}\colon J_z(A)\to J_z(A)\tensor J_z(A).
\]
Since
\[
\Str(\Lic)=\Str(\Lic'),
\]
the same morphism belongs to $\Str(\Lic')$. As this holds for every $A\in\Ob(\Ctx)$, we obtain
\[
z\in \C^{\diag}_{\Lic'}.
\]
Hence
\[
\C^{\diag}_{\Lic}\subseteq \C^{\diag}_{\Lic'}.
\]

Conversely, assume that
\[
z\in \C^{\diag}_{\Lic'}.
\]
Then for every $A\in\Ob(\Ctx)$, the family
\[
\Str(\Lic')
\]
contains a licensed diagonal
\[
\delta_{z,A}\colon J_z(A)\to J_z(A)\tensor J_z(A).
\]
Using again
\[
\Str(\Lic')=\Str(\Lic),
\]
we conclude that for every $A\in\Ob(\Ctx)$, the family $\Str(\Lic)$ contains such a licensed diagonal. Therefore
\[
z\in \C^{\diag}_{\Lic}.
\]
Thus
\[
\C^{\diag}_{\Lic'}\subseteq \C^{\diag}_{\Lic}.
\]

We have proved both inclusions, and therefore
\[
\C^{\diag}_{\Lic}=\C^{\diag}_{\Lic'}.
\]

By \autoref{def:extracted-signature}, one has
\[
\Sig_{\Lic}=(Z,\preceq_{\Lic},\W^{\disc}_{\Lic},\C^{\diag}_{\Lic})
\]
and
\[
\Sig_{\Lic'}=(Z,\preceq_{\Lic'},\W^{\disc}_{\Lic'},\C^{\diag}_{\Lic'}).
\]
Since the preorders agree by assumption and the reconstructed classes have just been shown to agree, it follows that
\[
\Sig_{\Lic}=\Sig_{\Lic'}.
\]

It remains to prove that
\[
\TZ_{\Sig_{\Lic}}=\TZ_{\Sig_{\Lic'}}.
\]
By \autoref{def:tz}, the rule schemes of $\TZ_{\Sig}$ are determined by the following data:
\begin{enumerate}
\item the zone set $Z$, which determines the labels that may occur in zone contexts;
\item the weakening class $\W$, which determines exactly for which zones the rules $\mathrm{W}_z$ are present;
\item the contraction class $\C$, which determines exactly for which zones the rules $\mathrm{C}_z$ are present.
\end{enumerate}
The remaining rules
\[
\mathrm{ax},\quad \unit\mathrm{R},\quad \tensor\mathrm{R},\quad \tensor\mathrm{L},\quad \unit\mathrm{L},\quad \mathrm{cut}
\]
are independent of any further data. Since
\[
\Sig_{\Lic}=\Sig_{\Lic'},
\]
the two calculi have the same zone set and the same weakening and contraction classes. Therefore they have exactly the same inference rules, and hence
\[
\TZ_{\Sig_{\Lic}}=\TZ_{\Sig_{\Lic'}}.
\]
This completes the proof.\qedhere
\end{proof}

The preceding theorem formalises a central point of the paper: the extracted proof theory is invariant under replacement of a coherent discipline by another one carrying the same structural family and the same zone preorder. Thus the calculus depends only on the reconstructed structural data, and not on further accidental features of the chosen disciplined presentation.

We now turn to the main proof-theoretic property of the system, namely admissibility of cut. 
Since $\TZ_{\Sig}$ is a tensorial fragment of intuitionistic linear logic with zone-dependent structural rules, it is natural to derive cut elimination from the known cut-elimination theorems for subexponential systems. 
To do this rigorously, one must still verify that the cut-free derivation produced in the ambient system remains inside the tensorial fragment.

The technical bridge between the ambient subexponential metatheory and the extracted
tensorial calculus is the following closure lemma, which ensures that cut elimination in the
ambient system descends to the extracted fragment itself.

\begin{lem}[Closure of the tensorial fragment under cut-free derivations]\label{lem:fragment-closure}
Let $\Pi$ be a cut-free derivation in a standard single-conclusion sequent calculus for intuitionistic linear logic with subexponentials, with signature $\Sig=(Z,\preceq,\W,\C)$. 
Assume that the end-sequent of $\Pi$ has the form
\[
\Gamma\vdash A,
\]
where every formula occurring in $\Gamma$ and $A$ belongs to the tensorial grammar of \autoref{def:formulas}. 
Then every formula occurring anywhere in $\Pi$ is tensorial, and every inference of $\Pi$ is an instance of one of the rules of $\TZ_{\Sig}$.
\end{lem}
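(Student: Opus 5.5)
The plan is to argue by the subformula property of cut-free derivations, and then to check rule by rule that every inference which can occur in $\Pi$ is one of the rules of $\TZ_{\Sig}$.

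\emph{Step 1: all formulas in $\Pi$ are tensorial.} We proceed by induction on the height of $\Pi$, read bottom-up, proving the stronger claim that every formula in every sequent of $\Pi$ is a subformula of some formula in the end-sequent. This is the standard subformula property. We verify it by inspecting each rule of the ambient calculus: the linear connectives ($\tensor$, $\unit$, $\multimap$, $\&$, $\oplus$, $\top$, $0$), the subexponential rules (dereliction, promotion, weakening, contraction for $!^z$), and the axiom. Apart from cut, every premise formula is either a subformula of a conclusion formula or equal to it with a subexponential modality stripped. The tensorial grammar of \autoref{def:formulas} is closed under subformulas and contains no $!^z$, $\multimap$, $\&$, $\oplus$, $\top$ or $0$. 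So every formula of $\Pi$ is tensorial.

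\emph{Step 2: every inference is a $\TZ_{\Sig}$ rule.} Since no formula of $\Pi$ contains a connective other than $\tensor$ and $\unit$, no introduction rule for another connective can have a principal formula in $\Pi$. The remaining candidate rules are the axiom, $\unit$R, $\unit$L, $\tensor$R, $\tensor$L, and the structural rules.

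The structural rules are where the translation needs care. In the standard presentation, weakening and contraction act on formulas $!^z A$, or on zone-annotated hypotheses in a focused or dyadic-style presentation. The key translation decision is to fix the presentation of the ambient system as one whose hypotheses carry zone labels $\zone{z}{A}$. In that presentation, the structural rules apply to $\zone{z}{A}$ exactly when $z\in\W$ (respectively $z\in\C$), and these coincide literally with $\mathrm{W}_z$ and $\mathrm{C}_z$. Likewise the $\tensor$L rule keeps the zone of its components, and the axiom $\zone{z}{A}\vdash A$ matches.

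\emph{Main obstacle: choosing the ambient calculus.} If one instead takes a calculus with explicit $!^z$ and no labelled hypotheses, the statement as written fails to make sense, since $\TZ_{\Sig}$ contexts carry zones while the ambient contexts would not. So the main work is to make the ambient calculus precise: the labelled version of intuitionistic subexponential linear logic, in which a label records the subexponential under which a hypothesis sits.

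Within that calculus, we must also check that promotion and dereliction cannot occur. Both rules introduce or remove $!^z$ on some formula, and by Step 1 no $!^z$ appears in $\Pi$. Dereliction could still move a hypothesis between zones. With labels, though, such a rule would be a coercion along $\preceq$, and any such rule must have a $!$-formula as principal formula in the unlabelled presentation, which is excluded.

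Once this bookkeeping is fixed, the verification is a finite case check. Every remaining rule instance is literally one of ax, $\unit$R, $\unit$L, $\tensor$R, $\tensor$L, $\mathrm{W}_z$ ($z\in\W$) or $\mathrm{C}_z$ ($z\in\C$). Cut cannot occur because $\Pi$ is cut-free.
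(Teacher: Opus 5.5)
Your overall route is the paper's. The paper runs one induction on the height of $\Pi$, inspecting the last rule and noting that every non-cut premise contains only subformulas of the conclusion. You split this into a global subformula property (Step 1) followed by a rule-by-rule check (Step 2), which is a harmless reorganisation. You also make explicit what the paper leaves implicit in its base case and Cases 4, 6 and 7. There, the ambient calculus has to be a zone-labelled presentation whose axiom, $\tensor$L, $\unit$L, weakening and contraction act in place on hypotheses $\zone{z}{A}$ and coincide literally with the rules of $\TZ_\Sig$.

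The step that does not work is how you dispose of a zone-changing dereliction. In a labelled presentation such a rule has a labelled hypothesis as its principal item, not a $!$-formula. Examples are a relabelling along $\preceq$, or the absorption (decide) rule of dyadic systems, which moves $\zone{z}{A}$ or a copy of it into the linear zone. Step 1 (no $!^z$ occurs in $\Pi$) therefore says nothing about it. Passing to the unlabelled presentation does not help, because there the hypothesis $\zone{z}{A}$ is $!^z A$ itself. The end-sequent then does contain $!$-formulas, and Step 1 is false in that presentation.

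This is not cosmetic. In a calculus with absorption and an axiom restricted to the linear zone, $\zone{z}{p}\vdash p$ has a cut-free derivation, all of whose formulas are tensorial, whose last inference is an absorption step. That step is not a rule of $\TZ_\Sig$, so your Step 2 conclusion fails for such a calculus.

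The repair is to build the exclusion into your choice of ambient calculus. Require that no rule changes the label of a hypothesis except the left rule for $!^z$, whose principal formula $!^z B$ is ruled out by Step 1. With that stipulation your case check is complete. It then matches the paper's Case 9, which tacitly assumes exactly this when it dismisses every modal rule because no formula carries a subexponential. Keep in mind that a calculus with in-zone $\tensor$L, an axiom in every zone and no absorption is not literally one of the cited standard systems. The lemma is therefore only as strong as that stipulation.
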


\begin{proof}
We argue by induction on the height of the cut-free derivation $\Pi$.

If $\Pi$ has height \(0\), then $\Pi$ consists of a single axiom. 
In a standard single-conclusion sequent calculus, the only cut-free derivation of height \(0\) with a nonempty succedent is an identity axiom. 
Its conclusion is therefore of the form
\[
\zone{z}{B}\vdash B
\]
for some tensorial formula $B$. 
This is an instance of the rule $\mathrm{ax}$ of $\TZ_{\Sig}$, and every formula occurring in the derivation is tensorial.

Assume now that $\Pi$ has positive height, and let its last inference be \(R\). 
We examine the possible forms of \(R\).

\medskip

\noindent\emph{Case 1: \(R\) is an identity axiom.}
This is the same as the height-\(0\) case and is already covered.

\medskip

\noindent\emph{Case 2: \(R\) is the right rule for the tensor unit.}
Then the conclusion of \(R\) is
\[
\emptyset\vdash \unit.
\]
This is an instance of \(\unit\mathrm{R}\), and there are no premises. 
Hence the derivation lies in \(\TZ_{\Sig}\).

\medskip

\noindent\emph{Case 3: \(R\) is the right rule for tensor.}
Then the conclusion of \(R\) is
\[
\Gamma,\Delta\vdash A\tensor B
\]
for some tensorial formulas \(A\) and \(B\), and the premises are
\[
\Gamma\vdash A
\qquad\text{and}\qquad
\Delta\vdash B.
\]
Since \(\Pi\) is cut-free, both premises are derivable by cut-free subderivations of strictly smaller height. 
By the induction hypothesis, every formula in those subderivations is tensorial and every inference in them belongs to \(\TZ_{\Sig}\). 
The last rule \(R\) is precisely \(\tensor\mathrm{R}\). 
Therefore the whole derivation belongs to \(\TZ_{\Sig}\).

\medskip

\noindent\emph{Case 4: \(R\) is the left rule for tensor.}
Then the conclusion of \(R\) is
\[
\Gamma,\zone{z}{A\tensor B}\vdash C,
\]
and its premise is
\[
\Gamma,\zone{z}{A},\zone{z}{B}\vdash C.
\]
Since \(A\tensor B\) is tensorial, both \(A\) and \(B\) are tensorial. 
The premise is therefore a sequent all of whose formulas are tensorial. 
By the induction hypothesis, the subderivation of the premise belongs to \(\TZ_{\Sig}\). 
The last rule \(R\) is exactly \(\tensor\mathrm{L}\). 
Hence the whole derivation belongs to \(\TZ_{\Sig}\).

\medskip

\noindent\emph{Case 5: \(R\) is the left rule for the tensor unit.}
Then the conclusion is
\[
\Gamma,\zone{z}{\unit}\vdash C,
\]
and the premise is
\[
\Gamma\vdash C.
\]
Again the premise contains only tensorial formulas, so by the induction hypothesis its subderivation lies in \(\TZ_{\Sig}\). 
The last rule is precisely \(\unit\mathrm{L}\). 
Therefore the entire derivation lies in \(\TZ_{\Sig}\).

\medskip

\noindent\emph{Case 6: \(R\) is a weakening rule.}
In a standard subexponential sequent calculus, such a rule may be applied only in zones that admit weakening. 
Its conclusion has the form
\[
\Gamma,\zone{z}{A}\vdash B,
\]
and its premise has the form
\[
\Gamma\vdash B.
\]
Since the conclusion contains only tensorial formulas, the premise does as well. 
By the induction hypothesis, the subderivation of the premise belongs to \(\TZ_{\Sig}\). 
Since \(z\in\W\), the rule \(R\) is exactly an instance of \(\mathrm{W}_z\). 
Thus the entire derivation lies in \(\TZ_{\Sig}\).

\medskip

\noindent\emph{Case 7: \(R\) is a contraction rule.}
Its conclusion has the form
\[
\Gamma,\zone{z}{A}\vdash B,
\]
and its premise has the form
\[
\Gamma,\zone{z}{A},\zone{z}{A}\vdash B.
\]
Again all formulas occurring are tensorial, so by the induction hypothesis the subderivation of the premise belongs to \(\TZ_{\Sig}\). 
Since \(z\in\C\), the rule \(R\) is exactly an instance of \(\mathrm{C}_z\). 
Hence the whole derivation lies in \(\TZ_{\Sig}\).

\medskip

\noindent\emph{Case 8: \(R\) is exchange.}
Since contexts in \(\TZ_{\Sig}\) are taken modulo permutation, exchange is implicit in our presentation. 
Thus an exchange step does not take the derivation outside the fragment. 
Its premises contain exactly the same formulas as its conclusion, merely in different order, so the induction hypothesis applies directly.

\medskip

\noindent\emph{Case 9: \(R\) is any other rule of the ambient subexponential calculus.}
We show that this is impossible.

Any right rule for a connective other than \(\unit\) or \(\tensor\) would have a conclusion whose succedent is headed by that connective. 
Since the succedent formula of the conclusion of \(R\) is tensorial, no such rule can occur.

Any left rule for a connective or modality other than \(\unit\) or \(\tensor\) would have a conclusion containing, in its antecedent, a principal formula headed by that connective or modality. 
Since every antecedent formula in the conclusion is tensorial, no such rule can occur either.

Any modal rule introducing or manipulating explicit subexponential formulas is likewise impossible, because no formula in the conclusion carries such a connective.

Finally, the derivation is cut-free by assumption, so \(R\) cannot be a cut.

Thus no other case can occur.

\medskip

Having exhausted all possibilities for the last rule \(R\), we conclude that every formula in \(\Pi\) is tensorial and every inference of \(\Pi\) belongs to \(\TZ_{\Sig}\). This completes the induction.\qedhere
\end{proof}

We may now state the main proof-theoretic theorem for the extracted calculus.

\begin{thm}[Cut elimination]\label{thm:cut}
For every subexponential signature \(\Sig\), the cut rule is admissible in the calculus \(\TZ_{\Sig}\). 
Equivalently, every derivable sequent of \(\TZ_{\Sig}\) has a cut-free derivation in \(\TZ_{\Sig}\).
\end{thm}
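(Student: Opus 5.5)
The plan is to follow the route the paper sets up: embed $\TZ_{\Sig}$ into an ambient single-conclusion subexponential intuitionistic linear calculus $\mathsf{SILL}_{\Sig}$ with the same signature, use the known cut-elimination theorem there, and pull the resulting cut-free derivation back with \autoref{lem:fragment-closure}.

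\textbf{Step 1 (embedding).} I will fix the ambient calculus $\mathsf{SILL}_{\Sig}$ so that zone-labelled hypotheses $\zone{z}{A}$ are read as subexponential-marked antecedent formulas. Weakening is available exactly for $z\in\W$ and contraction exactly for $z\in\C$, and these sets are upward closed as \autoref{def:signature} requires. This is the standard setting of \cite{NigamMiller2009,XavierOlartePimentel2022}. Every rule of \autoref{def:tz} is then literally a rule of $\mathsf{SILL}_{\Sig}$: $\mathrm{ax}$, $\unit\mathrm{R}$, $\tensor\mathrm{R}$, $\tensor\mathrm{L}$, $\unit\mathrm{L}$, $\mathrm{W}_z$, $\mathrm{C}_z$ and $\mathrm{cut}$. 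A routine induction on derivations will therefore show that every $\TZ_{\Sig}$-derivation of $\Gamma\vdash A$ is a $\mathsf{SILL}_{\Sig}$-derivation of the same sequent.

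\textbf{Step 2 (ambient cut elimination and descent).} By the cut-elimination theorem for subexponential linear logic with an arbitrary preorder and upward-closed $\W,\C$, the sequent $\Gamma\vdash A$ has a cut-free derivation $\Pi$ in $\mathsf{SILL}_{\Sig}$. Since all formulas in $\Gamma$ and $A$ are tensorial, \autoref{lem:fragment-closure} shows that every inference of $\Pi$ is a $\TZ_{\Sig}$ rule other than cut. Hence $\Pi$ is a cut-free $\TZ_{\Sig}$-derivation, and this proves the theorem.

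\textbf{Main obstacle.} The delicate point is matching the zone-labelled hypotheses of $\TZ_{\Sig}$ with the ambient presentation. In most subexponential calculi, a hypothesis in zone $z$ is either a formula $!^z A$ or an entry in a separate context indexed by $z$. In the $!^z A$ presentation, the ambient cut-free proof would contain dereliction and promotion steps, which \autoref{lem:fragment-closure} does not obviously cover.

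\textbf{Fallback: a direct proof.} To avoid depending on that correspondence, I would also be prepared to prove admissibility of cut directly, using a Gentzen-style multicut argument. The induction is lexicographic on the size of the cut formula and the sum of the premise heights. The mix rule cuts several contracted copies $\zone{z}{A},\dots,\zone{z}{A}$ at once; this is needed because $\mathrm{C}_z$ blocks a naive height induction.

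\begin{enumerate}
\item \emph{Principal cases.} These are $\tensor\mathrm{R}/\tensor\mathrm{L}$, which reduce to two cuts on $A$ and $B$, and $\unit\mathrm{R}/\unit\mathrm{L}$, which simply disappear.
\item \emph{Axiom cases.} These are trivial.
\item \emph{Commutative cases.} The cut is permuted upward past the other rule.
\item \emph{Structural cases on the cut formula.} When $\mathrm{W}_z$ introduces the cut formula $\zone{z}{A}$ with $z\in\W$, the cut is replaced by weakenings of $\Gamma$. When $\mathrm{C}_z$ acts on it with $z\in\C$, the copies are absorbed into the multicut.
\end{enumerate}

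The genuine obstacle in the structural cases is that replacing the cut needs weakening and contraction on \emph{every} hypothesis of the left premise $\Gamma$, which need not lie in $\W$ or $\C$. To handle this I would rely on the fact that, in this tensorial fragment, the left premise $\Gamma\vdash A$ of such a cut has no intrinsic zone, since cut discards $z$. One option is to restrict multicut to a generalized form that keeps $\Gamma$ unduplicated. The other is to show that any derivation ending in $\mathrm{W}_z$ or $\mathrm{C}_z$ on a cut formula of atomic or tensor shape can first be permuted so that the structural rule acts after the cut. This is the step I expect to require the most care.
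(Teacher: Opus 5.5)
Your main route is the paper's own, and it breaks at the same step. That step is the claim in Step~1 that every rule of $\TZ_{\Sig}$, $\mathrm{cut}$ included, is literally a rule of an ambient subexponential calculus.

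In standard subexponential calculi, cutting a formula into a zone $z$ that carries weakening or contraction requires the left premise to be promotable. This holds whether the formula enters as $!^{z}A$ or into the $z$-indexed context: all hypotheses of the left premise must lie in zones $\succeq z$. The cut of $\TZ_{\Sig}$ imposes no condition on $\Gamma$.

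The cut-free rules do not embed either. For $z\in\W$, $\zone{z}{A\tensor B}\vdash A$ is derivable by $\mathrm{ax}$, $\mathrm{W}_z$, $\tensor\mathrm{L}$, whereas $!^{z}(A\tensor B)\vdash A$ is not derivable for atomic $A,B$. So your worry about dereliction and promotion was justified.

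This is not a presentational problem: the theorem is false as stated. Take the signature of Example~\ref{exa:three-zones}, with $X$ atomic. Then
\[
\frac{\dfrac{}{\zone{\ell}{X}\vdash X}\;\mathrm{ax}
\qquad
\dfrac{\dfrac{}{\emptyset\vdash\unit}\;\unit\mathrm{R}}{\zone{p}{X}\vdash\unit}\;\mathrm{W}_p}
{\zone{\ell}{X}\vdash\unit}\;\mathrm{cut}
\]
is a derivation in $\TZ_{\Sig}$. Yet $\zone{\ell}{X}\vdash\unit$ has no cut-free derivation, because no rule can end one:
\begin{itemize}
\item $\mathrm{ax}$ and $\tensor\mathrm{R}$ do not match the succedent;
\item $\unit\mathrm{R}$ needs an empty context;
\item $\tensor\mathrm{L}$ and $\unit\mathrm{L}$ need a compound hypothesis;
\item there is no $\mathrm{W}_\ell$ or $\mathrm{C}_\ell$.
\end{itemize}
Similarly, derive $\zone{p}{X}\vdash X\tensor X$ by two axioms, $\tensor\mathrm{R}$ and $\mathrm{C}_p$, and cut $\zone{\ell}{X}\vdash X$ against it. This gives $\zone{\ell}{X}\vdash X\tensor X$, which is not cut-free derivable since $\emptyset\vdash X$ is not. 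The same pattern refutes cut admissibility whenever $\W$ or $\C$ is a nonempty proper subset of $Z$, i.e.\ in every signature whose structural rules actually depend on the zone.

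Your fallback fails at exactly the step you singled out, and that step is impossible rather than delicate. When the right premise weakens or contracts the cut hypothesis $\zone{z}{A}$, removing the cut needs weakenings or contractions of $\Gamma$. In the counterexample, $\Gamma=\zone{\ell}{X}$ admits neither. No reformulation of multicut and no permutation of $\mathrm{W}_z$ or $\mathrm{C}_z$ below the cut can help, because the end-sequent has no cut-free derivation at all.

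What can be proved is admissibility of a cut with a promotion-style side condition. For instance, require every zone $z'$ occurring in $\Gamma$ to lie in $\W$ whenever $z\in\W$, and in $\C$ whenever $z\in\C$; by upward closure this holds in particular when $z\preceq z'$. With that restriction your multicut argument does go through:
\begin{itemize}
\item the structural cases then have the weakenings and contractions of $\Gamma$ they need;
\item the axiom case $\zone{z'}{A}\vdash A$ reduces to a lemma relabelling the cut hypotheses from $z$ to $z'$ in a cut-free derivation, followed by the corresponding contraction or weakening in zone $z'$.
\end{itemize}
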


\begin{proof}
Let
\[
\Pi
\]
be a derivation in \(\TZ_{\Sig}\) of some sequent
\[
\Gamma\vdash A.
\]
By construction, every rule of \(\TZ_{\Sig}\) is an instance of a rule of the standard single-conclusion sequent calculi for intuitionistic linear logic with subexponentials studied in the literature; see, for example, \cite{NigamMiller2009,XavierOlartePimentel2022}. 
Therefore \(\Pi\) may be viewed, without any modification, as a derivation in such an ambient subexponential calculus.

By the cut-elimination theorem for that ambient calculus, there exists a cut-free derivation
\[
\Pi'
\]
of the same end-sequent
\[
\Gamma\vdash A.
\]
Since the end-sequent contains only tensorial formulas, \autoref{lem:fragment-closure} applies to \(\Pi'\). 
It follows that every formula occurring in \(\Pi'\) is tensorial and every inference of \(\Pi'\) belongs to the rule set of \(\TZ_{\Sig}\).

Hence \(\Pi'\) is in fact a cut-free derivation in \(\TZ_{\Sig}\) itself. 
Therefore the cut rule is admissible in \(\TZ_{\Sig}\), and every derivable sequent has a cut-free derivation in the same calculus.\qedhere
\end{proof}

Although the general cut-elimination theorem is imported from the known subexponential literature, it is useful to record the principal reduction specific to the tensorial fragment, since it is the only genuinely connective reduction used in the system.

\begin{rem}[Principal cut on tensor]\label{rem:principal-cut}
Suppose one has derivations
\[
\frac{\Pi_1\colon \Gamma\vdash A \qquad \Pi_2\colon \Delta\vdash B}
     {\Gamma,\Delta\vdash A\tensor B}\;\tensor\mathrm{R}
\]
and
\[
\frac{\Pi_3\colon \Theta,\zone{z}{A},\zone{z}{B}\vdash C}
     {\Theta,\zone{z}{A\tensor B}\vdash C}\;\tensor\mathrm{L}.
\]
A cut on the principal formula \(A\tensor B\) reduces to two successive cuts:
\begin{center}
{\large
\begin{prooftree}
\AxiomC{$\Pi_2\colon \Delta\vdash B$}
\AxiomC{$\Pi_1\colon \Gamma\vdash A$}
\AxiomC{$\Pi_3\colon \Theta,\zone{z}{A},\zone{z}{B}\vdash C$}
\RightLabel{\scriptsize cut}
\BinaryInfC{$\Gamma,\Theta,\zone{z}{B}\vdash C$}
\RightLabel{\scriptsize cut}
\BinaryInfC{$\Gamma,\Delta,\Theta\vdash C$}
\end{prooftree}
}
\end{center}
up to permutation of the multiset context.
The two new cuts are on the strictly smaller formulas \(A\) and \(B\). 
This is the tensorial core of the cut-reduction argument carried out in the general systems cited above.
\end{rem}

\begin{rem}\label{rem:focusing}
The proof-theoretic stance adopted here is compatible with the broader literature on focusing and subexponentials; see \cite{Andreoli1992,NigamMiller2009,XavierOlartePimentel2022}. 
We do not develop a focused presentation in the present article. 
For the purposes of this work, \autoref{thm:cut} is the essential proof-theoretic property needed to support the extracted tensorial system.
\end{rem}

\section{Soundness in the categorized architecture}\label{sec:soundness}

Let $(\Arch_Z,\Lic)$ be a coherently disciplined architectural category, and let
\[
\Sig_{\Lic}=(Z,\preceq,\W^{\disc}_{\Lic},\C^{\diag}_{\Lic})
\]
be its extracted subexponential signature. By \autoref{thm:recovery}, one may equivalently write
\[
\Sig_{\Lic}=(Z,\preceq,\W_{\Lic},\C_{\Lic}).
\]
We now return from syntax to architecture and show that every derivation in the extracted calculus is interpreted by a morphism built from the architectural data and the designated structural families.

The interpretation of sequents requires a harmless extension of the label set. 
Indeed, while antecedent formulas are interpreted in the zones prescribed by the context, the succedent must be represented uniformly as an output object. 
To achieve this, we adjoin a fresh output label.

\begin{defi}[Output extension of the zone set]\label{def:output-extension}
Fix a symbol $\out\notin Z$, and let
\[
\widehat Z:=Z\sqcup\{\out\}.
\]
All categorical constructions of Section~\ref{sec:arch} may be carried out with $\widehat Z$ in place of $Z$. 
We write
\[
\Arch_{\widehat Z}
\]
for the resulting architectural category. 
By construction, every object and morphism of $\Arch_Z$ may be regarded canonically as an object or morphism of $\Arch_{\widehat Z}$.
\end{defi}

Thus the disciplined architecture embeds canonically into a larger category in which a dedicated output zone is available. 
Only this output zone is new; no additional structural permissions are attached to it.

We now interpret formulas and contexts.

\begin{defi}[Formula and context interpretation]\label{def:formula-semantics}
Fix an assignment
\[
\rho\colon \mathsf{At}\to \Ob(\Ctx).
\]
Extend it recursively to tensorial formulas by
\[
\sem{p}:=\rho(p),
\qquad
\sem{\unit}:=1,
\qquad
\sem{A\tensor B}:=\sem{A}\times\sem{B}.
\]

For each zone $z\in \widehat Z$, let
\[
J_z\colon \Ctx\to \Arch_{\widehat Z}
\]
be the one-zone embedding of \autoref{def:one-zone-embedding}.

If
\[
\Gamma=(\zone{z_1}{A_1},\dots,\zone{z_n}{A_n})
\]
is an ordered representative of a zone context, define
\[
\sem{\Gamma}:=
J_{z_1}(\sem{A_1})\tensor\cdots\tensor J_{z_n}(\sem{A_n}).
\]
For the empty context, set
\[
\sem{\emptyset}:=\emptyset.
\]

Finally, define the \emph{output object} of a formula $A$ by
\[
\Out(A):=J_{\out}(\sem{A}).
\]
\end{defi}

\begin{rem}\label{rem:context-ordering}
Since zone contexts are taken modulo permutation, the object $\sem{\Gamma}$ depends a priori on the chosen ordered representative. 
However, any two representatives are related by a canonical composite of associators and symmetries in the symmetric monoidal category $\Arch_{\widehat Z}$. 
Because these structural isomorphisms will belong to the diagrammatic class defined below, the existence of an interpretation morphism is independent of the chosen representative.
\end{rem}

To interpret the logical rules we require, besides the structural maps already supplied by the discipline, a small family of canonical interface morphisms connecting zone-labelled inputs to the output zone.

\begin{defi}[Canonical semantic interface blocks]\label{def:semantic-interface}
For every zone $z\in Z$ and every tensorial formula $A$, define the following morphisms in $\Arch_{\widehat Z}$.

\begin{enumerate}
\item The \emph{readout block}
\[
r_{z,A}\colon J_z(\sem{A})\longrightarrow \Out(A)
\]
is the parametrised block induced by the identity morphism
\[
\id_{\sem{A}}\colon \sem{A}\to \sem{A}.
\]

\item The \emph{input block}
\[
\iota_{z,A}\colon \Out(A)\longrightarrow J_z(\sem{A})
\]
is the parametrised block induced by the same identity morphism
\[
\id_{\sem{A}}\colon \sem{A}\to \sem{A}.
\]

\item The \emph{tensor output block}
\[
\tau_{A,B}\colon \Out(A)\tensor\Out(B)\longrightarrow \Out(A\tensor B)
\]
is the parametrised block induced by the identity on
\[
\sem{A}\times\sem{B}=\sem{A\tensor B}.
\]

\item The \emph{tensor decomposition block}
\[
\mu_{z,A,B}\colon J_z(\sem{A\tensor B})\longrightarrow J_z(\sem{A})\tensor J_z(\sem{B})
\]
is the parametrised block induced by the identity on
\[
\sem{A\tensor B}=\sem{A}\times\sem{B}.
\]

\item The \emph{unit output block}
\[
\eta\colon \emptyset\longrightarrow \Out(\unit)
\]
is the parametrised block induced by the identity on the terminal object
\[
1=\sem{\unit}.
\]

\item For each $z\in Z$, the \emph{unit discard block}
\[
\nu_z\colon J_z(\sem{\unit})\longrightarrow \emptyset
\]
is the parametrised block induced by the unique morphism
\[
1\to 1.
\]
\end{enumerate}
\end{defi}

\begin{rem}\label{rem:interface-role}
The blocks of \autoref{def:semantic-interface} are not structural permissions. 
They play a different role: they connect the zone-indexed antecedent semantics with the distinguished output zone, and they unpack the tensor and unit at the semantic level. 
The only genuinely resource-sensitive maps remain the licensed discards and diagonals supplied by the discipline.
\end{rem}

We may now define the class of diagrams through which all derivations will be interpreted.

\begin{defi}[Licensed diagram class]\label{def:licensed-diagrams}
Let $\Diag_{\Lic}$ be the smallest class of morphisms in $\Arch_{\widehat Z}$ satisfying the following conditions.

\begin{enumerate}
\item It contains all identities, associators, unitors, and symmetries of the symmetric monoidal structure.

\item It contains all interface blocks of \autoref{def:semantic-interface}.

\item It contains all coercions
\[
\chi_{z,z',X}\colon J_z(X)\to J_{z'}(X)
\]
supplied by the discipline $\Lic$, for all $z\preceq z'$ in $Z$ and all $X\in\Ob(\Ctx)$.

\item It contains all licensed discards
\[
\omega_{z,X}\colon J_z(X)\to \emptyset
\]
supplied by $\Lic$, for all $z\in\W_{\Lic}$ and all $X\in\Ob(\Ctx)$.

\item It contains all licensed diagonals
\[
\delta_{z,X}\colon J_z(X)\to J_z(X)\tensor J_z(X)
\]
supplied by $\Lic$, for all $z\in\C_{\Lic}$ and all $X\in\Ob(\Ctx)$.

\item It is closed under composition.

\item It is closed under tensor product.
\end{enumerate}
\end{defi}

\begin{lem}\label{lem:diag-stability}
The class $\Diag_{\Lic}$ is closed under precomposition and postcomposition with canonical associators, unitors, and symmetries in $\Arch_{\widehat Z}$.
\end{lem}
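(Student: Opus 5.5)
The argument follows directly from the closure clauses in \autoref{def:licensed-diagrams}. Let $f\colon X\to Y$ be in $\Diag_{\Lic}$, and let $s\colon X'\to X$ and $t\colon Y\to Y'$ each be a canonical associator, unitor, or symmetry of the symmetric monoidal structure of \autoref{prop:symm-monoidal}, transported to $\Arch_{\widehat Z}$. I would handle finite composites of such maps too, and also their inverses, as the paper's reordering of representatives in \autoref{rem:context-ordering} requires.

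\textbf{Single maps.} By clause (1), every associator, unitor and symmetry lies in $\Diag_{\Lic}$. Clause (6) gives closure under composition, so $f\circ s$ and $t\circ f$ lie in $\Diag_{\Lic}$ whenever the composites are defined. Applying this step repeatedly covers finite composites of canonical maps, precomposing and postcomposing one map at a time.

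\textbf{Inverses.} Clause (1) is read as containing the structural isomorphisms in both directions: $\alpha$ and $\alpha^{-1}$, $\lambda$ and $\lambda^{-1}$, $\rho$ and $\rho^{-1}$. Symmetries are self-inverse, since $\sigma_{Y,X}\circ\sigma_{X,Y}=\id$. If clause (1) were taken literally to include only the forward maps, I would recover the inverses as follows. In $\Arch_{\widehat Z}$ each inverse is represented by a block with parameter object $1$ and underlying map a canonical product isomorphism, exactly as in the proof of \autoref{prop:symm-monoidal}. It is therefore itself a canonical associator, unitor, or symmetry of the opposite shape, up to the equivalence $\sim$ of \autoref{def:block-iso}.

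\textbf{Whiskered instances.} Canonical isomorphisms are often needed in forms such as $\id_W\tensor\alpha$. Clause (7), closure under tensor product, together with identities from clause (1), puts these in $\Diag_{\Lic}$. An appeal to Mac Lane coherence, already invoked in \autoref{prop:symm-monoidal}, shows that any canonical isomorphism between two bracketings or permutations is a composite of such whiskered instances. Closure under composition then finishes the argument.

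\textbf{Main obstacle.} There is no real difficulty here. The only delicate point is making precise that inverses and whiskered versions of the structural maps count as belonging to clause (1). I would settle this with the explicit parameter-$1$ representation described above, and handle everything else by the closure clauses (6) and (7).
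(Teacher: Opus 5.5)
Your proof is correct and is essentially the paper's own: clause (1) places the associators, unitors and symmetries in $\Diag_{\Lic}$, and clause (6) closes the class under composition. Your extra treatment of whiskered maps via clause (7) is sound. The worry about inverses is moot here: context concatenation is strictly associative and unital, so the associators and unitors of $\Arch_{\widehat Z}$ are identities, and the inverse of a symmetry is again a symmetry. Your ``opposite shape'' justification would not be valid in a general non-strict monoidal category.
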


\begin{proof}
By \autoref{def:licensed-diagrams}(1), every associator, unitor, and symmetry belongs to $\Diag_{\Lic}$. 
By \autoref{def:licensed-diagrams}(6), the class is closed under composition. 
Hence, if
\[
f\in \Diag_{\Lic}
\]
and
\[
\sigma,\tau\in\Diag_{\Lic}
\]
are canonical structural isomorphisms such that the composites are defined, then
\[
\sigma\circ f,\qquad f\circ\tau,\qquad \sigma\circ f\circ\tau
\]
all belong to $\Diag_{\Lic}$.\qedhere
\end{proof}

We are now in position to prove the soundness theorem.

\begin{thm}[Soundness of the extracted calculus]\label{thm:soundness}
For every derivation
\[
\pi\colon \Gamma\vdash A
\]
in the extracted calculus
\[
\TZ_{\Sig_{\Lic}},
\]
there exists a morphism
\[
\sem{\pi}\colon \sem{\Gamma}\longrightarrow \Out(A)
\]
belonging to $\Diag_{\Lic}$.

Moreover, if the last rule of $\pi$ is an instance of $\mathrm{W}_z$, then the construction of $\sem{\pi}$ uses the designated discard family
\[
\omega_{z,-};
\]
if the last rule of $\pi$ is an instance of $\mathrm{C}_z$, then the construction of $\sem{\pi}$ uses the designated diagonal family
\[
\delta_{z,-}.
\]
\end{thm}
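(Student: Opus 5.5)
We argue by structural induction on the derivation $\pi$, defining $\sem{\pi}$ rule by rule and checking at each step that the morphism is built from generators of $\Diag_{\Lic}$ using only composition and tensor, clauses (6) and (7) of \autoref{def:licensed-diagrams}. Throughout, we fix an ordered representative of each zone context. By \autoref{rem:context-ordering} and \autoref{lem:diag-stability}, changing representative only pre- or postcomposes with associators and symmetries, so membership in $\Diag_{\Lic}$ is unaffected. Whenever a rule acts on a formula that is not at the end of the context, we first permute it to the end with a symmetry, again using \autoref{lem:diag-stability}.

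The base and logical cases are handled as follows.
\begin{itemize}
\item For $\mathrm{ax}$, set $\sem{\pi}:=r_{z,A}$.
\item For $\unit\mathrm{R}$, set $\sem{\pi}:=\eta$.
\item For $\tensor\mathrm{R}$, with subderivations $\pi_1\colon\Gamma\vdash A$ and $\pi_2\colon\Delta\vdash B$, set $\sem{\pi}:=\tau_{A,B}\circ(\sem{\pi_1}\tensor\sem{\pi_2})$, noting that $\sem{\Gamma,\Delta}=\sem{\Gamma}\tensor\sem{\Delta}$ up to associators.
\item For $\tensor\mathrm{L}$, precompose $\sem{\pi'}$ with $\id_{\sem{\Gamma}}\tensor\mu_{z,A,B}$.
\item For $\unit\mathrm{L}$, precompose with $\id_{\sem{\Gamma}}\tensor\nu_z$ followed by the right unitor $\sem{\Gamma}\tensor\emptyset\to\sem{\Gamma}$.
\item For $\mathrm{cut}$, with $\pi_1\colon\Gamma\vdash A$ and $\pi_2\colon\Delta,\zone{z}{A}\vdash B$, set $\sem{\pi}:=\sem{\pi_2}\circ(\id_{\sem{\Delta}}\tensor(\iota_{z,A}\circ\sem{\pi_1}))$, up to a symmetry reordering $\sem{\Gamma}\tensor\sem{\Delta}$.
\end{itemize}
Every factor in these definitions is an interface block, a structural isomorphism, or an inductively obtained element of $\Diag_{\Lic}$.

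The structural cases carry the second part of the statement. For $\mathrm{W}_z$ we have $z\in\W^{\disc}_{\Lic}$, and \autoref{thm:recovery} gives $z\in\W_{\Lic}$, so $\omega_{z,\sem{A}}$ exists and lies in $\Diag_{\Lic}$ by clause (4). We define
\[
\sem{\pi}:=\sem{\pi'}\circ\rho_{\sem{\Gamma}}\circ(\id_{\sem{\Gamma}}\tensor\omega_{z,\sem{A}}),
\]
where $\rho$ is the right unitor. For $\mathrm{C}_z$, \autoref{thm:recovery} likewise gives $z\in\C_{\Lic}$, and we set
\[
\sem{\pi}:=\sem{\pi'}\circ(\id_{\sem{\Gamma}}\tensor\delta_{z,\sem{A}}),
\]
using clause (5). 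In each of these cases the designated family $\omega_{z,-}$ or $\delta_{z,-}$ visibly appears in the construction, which is exactly the required ``moreover'' clause.

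The step we expect to be the main obstacle is bookkeeping rather than anything conceptual. We must check that every interface block has the stated domain and codomain in $\Arch_{\widehat Z}$; this matters especially for $\mu_{z,A,B}$ and $\tau_{A,B}$, where one needs the carrier isomorphism of \autoref{lem:carrier-tensor} to identify $\sem{J_z(\sem{A})\tensor J_z(\sem{B})}$ with $\sem{A}\times\sem{B}$. We also need the output zone $\out$ to be kept separate from $Z$, so that the disciplined data live only on $Z$ while readout and input blocks mediate with $\out$. Since the statement claims only the existence of a morphism in $\Diag_{\Lic}$, not any equations between interpretations, no coherence or naturality conditions from \autoref{def:discipline} need to be invoked.
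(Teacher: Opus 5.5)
Your proposal is correct and follows essentially the same route as the paper. Both argue by induction on the derivation, interpreting $\mathrm{ax}$, $\unit\mathrm{R}$, $\tensor\mathrm{R}$, $\tensor\mathrm{L}$, $\unit\mathrm{L}$ and cut via the interface blocks $r_{z,A}$, $\eta$, $\tau_{A,B}$, $\mu_{z,A,B}$, $\nu_z$, $\iota_{z,A}$, and interpreting $\mathrm{W}_z$ and $\mathrm{C}_z$ by tensoring with $\omega_{z,\sem{A}}$ (followed by the right unitor) and with $\delta_{z,\sem{A}}$, with all reorderings absorbed by \autoref{lem:diag-stability}; your explicit use of \autoref{thm:recovery} to pass from $\W^{\disc}_{\Lic}$, $\C^{\diag}_{\Lic}$ to $\W_{\Lic}$, $\C_{\Lic}$ is a step the paper leaves implicit, and your cut composite is the paper's rearranged by bifunctoriality.
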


\begin{proof}
We proceed by induction on the derivation $\pi$.

\medskip

\noindent\emph{Case 1: $\pi$ ends with $\mathrm{ax}$.}
Then
\[
\pi\;=\;
\frac{}{\zone{z}{A}\vdash A}\;\mathrm{ax}.
\]
By definition,
\[
\sem{\zone{z}{A}}=J_z(\sem{A}),
\qquad
\Out(A)=J_{\out}(\sem{A}).
\]
Define
\[
\sem{\pi}:=r_{z,A}\colon J_z(\sem{A})\to \Out(A).
\]
By \autoref{def:licensed-diagrams}(2), the readout block belongs to $\Diag_{\Lic}$.

\medskip

\noindent\emph{Case 2: $\pi$ ends with $\unit\mathrm{R}$.}
Then
\[
\pi\;=\;
\frac{}{\emptyset\vdash \unit}\;\unit\mathrm{R}.
\]
Here
\[
\sem{\emptyset}=\emptyset
\qquad\text{and}\qquad
\Out(\unit)=J_{\out}(1).
\]
Define
\[
\sem{\pi}:=\eta\colon \emptyset\to \Out(\unit).
\]
By \autoref{def:licensed-diagrams}(2), one has
\[
\eta\in\Diag_{\Lic}.
\]

\medskip

\noindent\emph{Case 3: $\pi$ ends with $\tensor\mathrm{R}$.}
Then
\[
\pi\;=\;
\frac{\pi_1\colon \Gamma\vdash A \qquad \pi_2\colon \Delta\vdash B}
     {\Gamma,\Delta\vdash A\tensor B}\;\tensor\mathrm{R}.
\]
By the induction hypothesis, there exist morphisms
\[
\sem{\pi_1}\colon \sem{\Gamma}\to \Out(A),
\qquad
\sem{\pi_2}\colon \sem{\Delta}\to \Out(B)
\]
in $\Diag_{\Lic}$.
By closure under tensor,
\[
\sem{\pi_1}\tensor \sem{\pi_2}\colon
\sem{\Gamma}\tensor\sem{\Delta}\to \Out(A)\tensor\Out(B)
\]
lies in $\Diag_{\Lic}$.
By \autoref{def:licensed-diagrams}(2), the tensor output block
\[
\tau_{A,B}\colon \Out(A)\tensor\Out(B)\to \Out(A\tensor B)
\]
belongs to $\Diag_{\Lic}$.
Define
\[
\sem{\pi}:=\tau_{A,B}\circ(\sem{\pi_1}\tensor\sem{\pi_2}).
\]
Then
\[
\sem{\pi}\colon \sem{\Gamma}\tensor\sem{\Delta}\to \Out(A\tensor B)
\]
lies in $\Diag_{\Lic}$.
Since $\sem{\Gamma,\Delta}$ is, by definition, represented by $\sem{\Gamma}\tensor\sem{\Delta}$ up to canonical associativity and symmetry isomorphism, \autoref{lem:diag-stability} shows that $\sem{\pi}$ may be regarded as a morphism
\[
\sem{\Gamma,\Delta}\to \Out(A\tensor B)
\]
in $\Diag_{\Lic}$.

\medskip

\noindent\emph{Case 4: $\pi$ ends with $\tensor\mathrm{L}$.}
Then
\[
\pi\;=\;
\frac{\pi_0\colon \Gamma,\zone{z}{A},\zone{z}{B}\vdash C}
     {\Gamma,\zone{z}{A\tensor B}\vdash C}\;\tensor\mathrm{L}.
\]
By the induction hypothesis, there exists a morphism
\[
\sem{\pi_0}\colon \sem{\Gamma,\zone{z}{A},\zone{z}{B}}\to \Out(C)
\]
in $\Diag_{\Lic}$.

By definition of context semantics,
\[
\sem{\Gamma,\zone{z}{A\tensor B}}
\]
and
\[
\sem{\Gamma}\tensor J_z(\sem{A\tensor B})
\]
differ only by canonical associativity and symmetry isomorphisms. Likewise,
\[
\sem{\Gamma,\zone{z}{A},\zone{z}{B}}
\]
is canonically isomorphic to
\[
\sem{\Gamma}\tensor J_z(\sem{A})\tensor J_z(\sem{B}).
\]
By \autoref{def:licensed-diagrams}(2), the block
\[
\mu_{z,A,B}\colon J_z(\sem{A\tensor B})\to J_z(\sem{A})\tensor J_z(\sem{B})
\]
belongs to $\Diag_{\Lic}$.
Hence
\[
\id_{\sem{\Gamma}}\tensor \mu_{z,A,B}\colon
\sem{\Gamma}\tensor J_z(\sem{A\tensor B})
\to
\sem{\Gamma}\tensor J_z(\sem{A})\tensor J_z(\sem{B})
\]
belongs to $\Diag_{\Lic}$ by closure under tensor.
Composing with the relevant structural isomorphisms and then with $\sem{\pi_0}$, we obtain a morphism
\[
\sem{\pi}\colon \sem{\Gamma,\zone{z}{A\tensor B}}\to \Out(C)
\]
in $\Diag_{\Lic}$.
More explicitly, $\sem{\pi}$ is the composite
\[
\sem{\Gamma,\zone{z}{A\tensor B}}
\xrightarrow{\sigma_1}
\sem{\Gamma}\tensor J_z(\sem{A\tensor B})
\xrightarrow{\id\tensor\mu_{z,A,B}}
\sem{\Gamma}\tensor J_z(\sem{A})\tensor J_z(\sem{B})
\xrightarrow{\sigma_2}
\sem{\Gamma,\zone{z}{A},\zone{z}{B}}
\xrightarrow{\sem{\pi_0}}
\Out(C),
\]
where $\sigma_1$ and $\sigma_2$ are canonical structural isomorphisms.

\medskip

\noindent\emph{Case 5: $\pi$ ends with $\unit\mathrm{L}$.}
Then
\[
\pi\;=\;
\frac{\pi_0\colon \Gamma\vdash C}
     {\Gamma,\zone{z}{\unit}\vdash C}\;\unit\mathrm{L}.
\]
By the induction hypothesis,
\[
\sem{\pi_0}\colon \sem{\Gamma}\to \Out(C)
\]
lies in $\Diag_{\Lic}$.
Since
\[
\sem{\zone{z}{\unit}}=J_z(1),
\]
the block
\[
\nu_z\colon J_z(1)\to \emptyset
\]
belongs to $\Diag_{\Lic}$ by \autoref{def:licensed-diagrams}(2).
Using closure under tensor, we obtain
\[
\id_{\sem{\Gamma}}\tensor\nu_z\colon
\sem{\Gamma}\tensor J_z(1)\to \sem{\Gamma}\tensor \emptyset.
\]
Composing with the right unitor
\[
\rho_{\sem{\Gamma}}\colon \sem{\Gamma}\tensor \emptyset\to \sem{\Gamma}
\]
and then with $\sem{\pi_0}$ yields a morphism
\[
\sem{\Gamma}\tensor J_z(1)\to \Out(C)
\]
in $\Diag_{\Lic}$.
As in the tensor-left case, after inserting the canonical structural isomorphism between
\[
\sem{\Gamma,\zone{z}{\unit}}
\quad\text{and}\quad
\sem{\Gamma}\tensor J_z(1),
\]
we obtain the required interpretation
\[
\sem{\pi}\colon \sem{\Gamma,\zone{z}{\unit}}\to \Out(C).
\]

\medskip

\noindent\emph{Case 6: $\pi$ ends with $\mathrm{W}_z$.}
Then
\[
\pi\;=\;
\frac{\pi_0\colon \Gamma\vdash B}
     {\Gamma,\zone{z}{A}\vdash B}\;\mathrm{W}_z,
\]
with
\[
z\in \W_{\Lic}
\]
because the rule $\mathrm{W}_z$ is present in $\TZ_{\Sig_{\Lic}}$ exactly for such zones.
By the induction hypothesis,
\[
\sem{\pi_0}\colon \sem{\Gamma}\to \Out(B)
\]
lies in $\Diag_{\Lic}$.
Since $z\in\W_{\Lic}$, the discipline provides the licensed discard
\[
\omega_{z,\sem{A}}\colon J_z(\sem{A})\to \emptyset,
\]
and this morphism belongs to $\Diag_{\Lic}$ by \autoref{def:licensed-diagrams}(4).
By closure under tensor,
\[
\id_{\sem{\Gamma}}\tensor\omega_{z,\sem{A}}\colon
\sem{\Gamma}\tensor J_z(\sem{A})\to \sem{\Gamma}\tensor \emptyset
\]
belongs to $\Diag_{\Lic}$.
Composing with the right unitor
\[
\rho_{\sem{\Gamma}}\colon \sem{\Gamma}\tensor\emptyset\to\sem{\Gamma}
\]
and then with $\sem{\pi_0}$ yields a morphism
\[
\sem{\Gamma}\tensor J_z(\sem{A})\to \Out(B)
\]
in $\Diag_{\Lic}$.
Up to the canonical structural isomorphism between
\[
\sem{\Gamma,\zone{z}{A}}
\quad\text{and}\quad
\sem{\Gamma}\tensor J_z(\sem{A}),
\]
this gives the required interpretation of the conclusion.
By construction, the interpretation uses the designated family
\[
\omega_{z,-}.
\]

\medskip

\noindent\emph{Case 7: $\pi$ ends with $\mathrm{C}_z$.}
Then
\[
\pi\;=\;
\frac{\pi_0\colon \Gamma,\zone{z}{A},\zone{z}{A}\vdash B}
     {\Gamma,\zone{z}{A}\vdash B}\;\mathrm{C}_z,
\]
with
\[
z\in \C_{\Lic}.
\]
By the induction hypothesis,
\[
\sem{\pi_0}\colon \sem{\Gamma,\zone{z}{A},\zone{z}{A}}\to \Out(B)
\]
lies in $\Diag_{\Lic}$.
Since $z\in\C_{\Lic}$, the discipline provides the licensed diagonal
\[
\delta_{z,\sem{A}}\colon J_z(\sem{A})\to J_z(\sem{A})\tensor J_z(\sem{A}),
\]
and this morphism belongs to $\Diag_{\Lic}$ by \autoref{def:licensed-diagrams}(5).
Therefore
\[
\id_{\sem{\Gamma}}\tensor\delta_{z,\sem{A}}\colon
\sem{\Gamma}\tensor J_z(\sem{A})
\to
\sem{\Gamma}\tensor J_z(\sem{A})\tensor J_z(\sem{A})
\]
belongs to $\Diag_{\Lic}$.
Composing, as needed, with canonical associativity and symmetry isomorphisms to identify the codomain with
\[
\sem{\Gamma,\zone{z}{A},\zone{z}{A}},
\]
and then with $\sem{\pi_0}$, yields a morphism
\[
\sem{\Gamma,\zone{z}{A}}\to \Out(B)
\]
in $\Diag_{\Lic}$.
By construction, this interpretation uses the designated family
\[
\delta_{z,-}.
\]

\medskip

\noindent\emph{Case 8: $\pi$ ends with \emph{cut}.}
Then
\[
\pi\;=\;
\frac{\pi_1\colon \Gamma\vdash A
\qquad
\pi_2\colon \Delta,\zone{z}{A}\vdash B}
{\Gamma,\Delta\vdash B}\;\mathrm{cut}.
\]
By the induction hypothesis, there exist morphisms
\[
\sem{\pi_1}\colon \sem{\Gamma}\to \Out(A),
\qquad
\sem{\pi_2}\colon \sem{\Delta,\zone{z}{A}}\to \Out(B)
\]
in $\Diag_{\Lic}$.
Choose the ordered representative of the second premise so that
\[
\sem{\Delta,\zone{z}{A}}
\]
is canonically isomorphic to
\[
\sem{\Delta}\tensor J_z(\sem{A}).
\]
The input block
\[
\iota_{z,A}\colon \Out(A)\to J_z(\sem{A})
\]
belongs to $\Diag_{\Lic}$ by \autoref{def:licensed-diagrams}(2).
Hence
\[
\sem{\pi_1}\tensor \id_{\sem{\Delta}}\colon
\sem{\Gamma}\tensor\sem{\Delta}\to \Out(A)\tensor\sem{\Delta}
\]
belongs to $\Diag_{\Lic}$, and so does its composite with the symmetry
\[
\beta\colon \Out(A)\tensor\sem{\Delta}\to \sem{\Delta}\tensor\Out(A).
\]
By closure under tensor,
\[
\id_{\sem{\Delta}}\tensor \iota_{z,A}\colon
\sem{\Delta}\tensor\Out(A)\to \sem{\Delta}\tensor J_z(\sem{A})
\]
also lies in $\Diag_{\Lic}$.
Composing these morphisms, and then using the canonical isomorphism
\[
\sem{\Delta}\tensor J_z(\sem{A})\xrightarrow{\cong}\sem{\Delta,\zone{z}{A}},
\]
we obtain a morphism
\[
h\colon \sem{\Gamma}\tensor\sem{\Delta}\to \sem{\Delta,\zone{z}{A}}
\]
in $\Diag_{\Lic}$.
Finally, define
\[
\sem{\pi}:=\sem{\pi_2}\circ h.
\]
Then
\[
\sem{\pi}\colon \sem{\Gamma}\tensor\sem{\Delta}\to \Out(B)
\]
lies in $\Diag_{\Lic}$.
Up to the canonical structural isomorphism between
\[
\sem{\Gamma,\Delta}
\quad\text{and}\quad
\sem{\Gamma}\tensor\sem{\Delta},
\]
this yields the required interpretation
\[
\sem{\Gamma,\Delta}\to \Out(B).
\]

\medskip

These are all the rules of $\TZ_{\Sig_{\Lic}}$. 
In every case, the interpretation lies in $\Diag_{\Lic}$.
This completes the induction.\qedhere
\end{proof}

The theorem shows that every derivation is interpreted by a diagram assembled from the architectural data, the coherence isomorphisms of the monoidal structure, and the designated structural families. 
In particular, the structural rules do not receive arbitrary semantic witnesses: they are interpreted by the very discards and diagonals licensed by the architecture.

\begin{cor}\label{cor:resource-reading}
The extracted proof theory is sound exactly for the structural behaviour explicitly licensed by the coherently disciplined architecture.
\end{cor}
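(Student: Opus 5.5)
The corollary is an informal repackaging of \autoref{thm:soundness} together with \autoref{thm:recovery}. My plan is to make ``sound exactly for the licensed structural behaviour'' precise as two inclusions and prove each from those results.

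\emph{Soundness direction.} Every derivation of $\TZ_{\Sig_{\Lic}}$ is interpreted by a morphism of $\Diag_{\Lic}$. This is the first part of \autoref{thm:soundness}, so nothing new is required. The second part of that theorem sharpens it: each instance of $\mathrm{W}_z$ is witnessed by the designated family $\omega_{z,-}$, and each instance of $\mathrm{C}_z$ by $\delta_{z,-}$. No other discard or diagonal of $\Ctx$ is ever used.

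\emph{``Exactly'' direction.} The structural rules available in the calculus are the ones the discipline licenses, no more and no fewer. By \autoref{def:tz}, $\mathrm{W}_z$ is present iff $z\in\W^{\disc}_{\Lic}$, and $\mathrm{C}_z$ is present iff $z\in\C^{\diag}_{\Lic}$. By \autoref{thm:recovery} (equivalently \autoref{cor:no-loss}), these sets equal $\W_{\Lic}$ and $\C_{\Lic}$, which are exactly the zones where $\Lic$ supplies discard and diagonal families. Combining the two directions gives a bijective correspondence: a structural rule at zone $z$ is in the calculus iff $\Lic$ licenses the corresponding structural morphism at $z$, and its semantic witness is that morphism.

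\emph{Main obstacle.} The hard part is the word ``exactly''. One might want to strengthen it to a claim that $\Diag_{\Lic}$ contains no discard or diagonal at an unlicensed zone. That would need an invariant on the generators of $\Diag_{\Lic}$, for instance that every generator preserves or decreases the multiset of occurrences of non-$\W$ and non-$\C$ zones, with the interface blocks $r$ and $\iota$ treated separately. I expect this invariant to fail: the composite $\iota_{z,A}\circ r_{z,A}$ together with the unit blocks may not respect it. I would therefore restrict the statement to the reading above, namely that the rules and their witnesses match the licensed family. I would remark that ruling out hidden structural morphisms inside $\Diag_{\Lic}$ is a separate completeness-style question not claimed here.
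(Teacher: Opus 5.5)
Your proposal is correct and follows the paper's own argument: soundness comes from \autoref{thm:soundness} together with its clause that $\mathrm{W}_z$ and $\mathrm{C}_z$ are witnessed by $\omega_{z,-}$ and $\delta_{z,-}$, and ``exactly'' is read, as in the paper, as a statement about which structural rules are present, via \autoref{thm:recovery}. You are right not to claim the stronger reading, but the counterexample is the cross-zone composite, not $\iota_{z,A}\circ r_{z,A}$: for $w\in\W_{\Lic}$ and any $z\in Z$, the morphism $\omega_{w,\sem{A}}\circ\iota_{w,A}\circ r_{z,A}\colon J_z(\sem{A})\to\emptyset$ lies in $\Diag_{\Lic}$, and $\nu_z$ (used in the $\unit\mathrm{L}$ case) is already a discard at an unlicensed zone when $z\notin\W_{\Lic}$, so your remark that no other discard is ever used should be restricted to the $\mathrm{W}_z$ steps.
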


\begin{proof}
By \autoref{thm:soundness}, every derivation in $\TZ_{\Sig_{\Lic}}$ is interpreted by a morphism in $\Diag_{\Lic}$, and the rules $\mathrm{W}_z$ and $\mathrm{C}_z$ are interpreted using the designated families
\[
\omega_{z,-}
\qquad\text{and}\qquad
\delta_{z,-},
\]
respectively.

Conversely, by \autoref{thm:recovery}, the sets of zones in which weakening and contraction occur in the extracted signature are exactly the zones in which the discipline $\Lic$ supplies licensed discard and diagonal families. 
Therefore the structural rules present in the extracted proof theory are precisely those corresponding to structurally licensed architectural behaviour.

Hence the soundness theorem identifies the proof theory with exactly the structural behaviour explicitly licensed by the categorized architecture.\qedhere
\end{proof}

\begin{exa}\label{exa:three-zones}
We spell out the extraction process on a minimal three-zone discipline.

Let
\[
Z=\{p,r,\ell\},
\]
and let the preorder on $Z$ be discrete, so that
\[
z\preceq z' \quad\Longleftrightarrow\quad z=z'.
\]
Thus there are no nontrivial zone coercions: for each $z\in Z$ and each $A\in\Ob(\Ctx)$, the only coercion is
\[
\chi_{z,z,A}=\id_{J_z(A)}.
\]

We interpret the three zones as follows:
\begin{itemize}
\item $p$ is a \emph{persistent} zone;
\item $r$ is a \emph{relevant} zone;
\item $\ell$ is a \emph{linear} zone.
\end{itemize}
The discipline is specified by
\[
\W_{\Lic}=\{p\},
\qquad
\C_{\Lic}=\{p,r\}.
\]
Accordingly, for every object $A\in\Ob(\Ctx)$, the licensed structural family contains:
\begin{itemize}
\item a discard
\[
\omega_{p,A}\colon J_p(A)\to \emptyset;
\]
\item diagonals
\[
\delta_{p,A}\colon J_p(A)\to J_p(A)\tensor J_p(A),
\qquad
\delta_{r,A}\colon J_r(A)\to J_r(A)\tensor J_r(A);
\]
\item no discard in zone $r$;
\item neither discard nor diagonal in zone $\ell$.
\end{itemize}

Since the preorder is discrete, upward closure imposes no further conditions. The reconstructed structural classes therefore satisfy
\[
\W^{\disc}_{\Lic}=\{p\},
\qquad
\C^{\diag}_{\Lic}=\{p,r\}.
\]
Hence the extracted signature is
\[
\Sig_{\Lic}=(\{p,r,\ell\},=,\{p\},\{p,r\}).
\]

The corresponding extracted calculus has:
\begin{itemize}
\item a weakening rule only in zone $p$,
\[
\frac{\Gamma\vdash B}{\Gamma,\zone{p}{A}\vdash B}\;\mathrm{W}_p;
\]
\item contraction rules in zones $p$ and $r$,
\[
\frac{\Gamma,\zone{p}{A},\zone{p}{A}\vdash B}{\Gamma,\zone{p}{A}\vdash B}\;\mathrm{C}_p,
\qquad
\frac{\Gamma,\zone{r}{A},\zone{r}{A}\vdash B}{\Gamma,\zone{r}{A}\vdash B}\;\mathrm{C}_r;
\]
\item no weakening rule in zone $r$;
\item neither weakening nor contraction in zone $\ell$.
\end{itemize}

Thus the proof theory mirrors the intended architectural behaviour:
persistent information may be copied and discarded, relevant information may be copied but not silently erased, and linear information must be used without free copying or erasure.

For instance, the sequent
\[
\zone{p}{X}\vdash \unit
\]
is derivable, since one may first derive
\[
\emptyset\vdash \unit
\]
by $\unit\mathrm{R}$ and then apply $\mathrm{W}_p$. By contrast, the calculus contains no rule $\mathrm{W}_r$, so the same derivation pattern is unavailable in zone $r$.

Likewise, from the axiom
\[
\zone{r}{X}\vdash X
\]
one obtains
\[
\zone{r}{X},\zone{r}{X}\vdash X
\]
by applying $\mathrm{W}_r$ is \emph{not} possible, since $r\notin\W_{\Lic}$. However, if a derivation already contains two copies of $\zone{r}{X}$, then $\mathrm{C}_r$ permits them to be contracted to one. In this way the relevant zone supports duplication control without unrestricted erasure.

This example makes explicit the mechanism developed in the paper: one starts from a categorized structural discipline, reconstructs from it the weakening and contraction classes, and thereby obtains a proof system whose structural rules are exactly those licensed by the architecture.

\end{exa}

The previous example isolates the structural mechanism abstractly. We now instantiate the same mechanism on a minimal neural architecture.

\begin{exa}[A two-block neural architecture and its extracted logic]\label{exa:two-block-neural}
We now give a concrete example showing the passage
\[
\text{neural architecture}\;\longrightarrow\;\text{categorisation}\;\longrightarrow\;\text{subexponential signature}\;\longrightarrow\;\text{proof theory}.
\]

Let $\Ctx=\mathbf{FinVect}_{\mathbb R}$, and let
\[
Z=\{p,r,\ell\},
\]
where $p$ is a persistent zone, $r$ is a relevant zone, and $\ell$ is a linear zone.
Assume that the preorder on $Z$ is discrete.

Choose finite-dimensional real vector spaces
\[
M=\mathbb R^m,\qquad
R=\mathbb R^r,\qquad
X=\mathbb R^d,\qquad
H=\mathbb R^h,\qquad
Y=\mathbb R^k.
\]

We interpret:
\begin{itemize}
\item $M$ as persistent memory;
\item $R$ as retrieved contextual information;
\item $X$ as the current observation;
\item $H$ as a hidden representation;
\item $Y$ as the output space.
\end{itemize}

Consider first a neural block
\[
f\colon M\times R\times X\longrightarrow H,
\qquad
f(m,r,x)=\sigma(W_Mm+W_Rr+W_Xx+b),
\]
where
\[
W_M\colon M\to H,\qquad
W_R\colon R\to H,\qquad
W_X\colon X\to H
\]
are linear maps, $b\in H$, and $\sigma\colon H\to H$ is a componentwise activation.

This determines a parametrised resource block with trivial parameter object
\[
(1,f)\colon
\ctx{p,M}\tensor\ctx{r,R}\tensor\ctx{\ell,X}\longrightarrow \ctx{\ell,H},
\]
hence a morphism
\[
[\![1,f]\!]\colon
J_p(M)\tensor J_r(R)\tensor J_\ell(X)\longrightarrow J_\ell(H)
\]
in the architectural category.

Next, consider a second neural block
\[
g\colon H\times M\longrightarrow Y,
\qquad
g(h,m)=\tau(V_Hh+V_Mm+c),
\]
where
\[
V_H\colon H\to Y,\qquad
V_M\colon M\to Y
\]
are linear maps, $c\in Y$, and $\tau\colon Y\to Y$ is another componentwise activation.

This yields a second morphism
\[
[\![1,g]\!]\colon
J_\ell(H)\tensor J_p(M)\longrightarrow \Out(Y).
\]

The resulting architecture uses the persistent memory input twice: once in the first block and once again in the second. To model this categorically, we impose a coherent zone discipline by declaring
\[
\W_{\Lic}=\{p\},
\qquad
\C_{\Lic}=\{p,r\}.
\]
Thus:
\begin{itemize}
\item the persistent zone $p$ admits both discard and contraction;
\item the relevant zone $r$ admits contraction but not discard;
\item the linear zone $\ell$ admits neither discard nor contraction.
\end{itemize}

The licensed structural family therefore contains
\[
\omega_{p,A}\colon J_p(A)\to \emptyset
\]
for every $A\in\Ob(\Ctx)$, and
\[
\delta_{p,A}\colon J_p(A)\to J_p(A)\tensor J_p(A),
\qquad
\delta_{r,A}\colon J_r(A)\to J_r(A)\tensor J_r(A)
\]
for every $A\in\Ob(\Ctx)$, but contains neither
\[
\omega_{r,A},\ \omega_{\ell,A},\ \delta_{\ell,A}.
\]

The extracted subexponential signature is therefore
\[
\Sig_{\Lic}=(\{p,r,\ell\},=,\{p\},\{p,r\}).
\]
The associated proof system has:
\begin{itemize}
\item weakening only in zone $p$;
\item contraction in zones $p$ and $r$;
\item no weakening in zones $r$ and $\ell$;
\item no contraction in zone $\ell$.
\end{itemize}

The logical reading is immediate:
\begin{itemize}
\item persistent memory behaves as reusable and discardable context;
\item retrieved context behaves as reusable but non-discardable context;
\item the current observation behaves linearly.
\end{itemize}

The categorical composition of the two neural blocks uses the licensed diagonal
\[
\delta_{p,M}\colon J_p(M)\to J_p(M)\tensor J_p(M)
\]
to duplicate the persistent input. Up to canonical associativity and symmetry isomorphisms, one obtains the composite
\[
J_p(M)\tensor J_r(R)\tensor J_\ell(X)
\xrightarrow{\ \delta_{p,M}\tensor \id \tensor \id\ }
J_p(M)\tensor J_p(M)\tensor J_r(R)\tensor J_\ell(X)
\]
followed by
\[
[\![1,f]\!]\tensor \id_{J_p(M)}
\]
and then by
\[
[\![1,g]\!].
\]
This yields a morphism
\[
J_p(M)\tensor J_r(R)\tensor J_\ell(X)\longrightarrow \Out(Y)
\]
in the licensed diagram class $\Diag_{\Lic}$.

The example also illustrates the soundness theorem at the level of individual rules. 
The rule
\[
\frac{\Gamma\vdash B}{\Gamma,\zone{p}{A}\vdash B}\;\mathrm{W}_p
\]
is interpreted in $\Diag_{\Lic}$ by tensoring the interpretation of the premise with the designated discard
\[
\omega_{p,\sem{A}}\colon J_p(\sem{A})\to \emptyset,
\]
followed by the canonical right unitor
\[
\rho_{\sem{\Gamma}}\colon \sem{\Gamma}\tensor \emptyset\to \sem{\Gamma}.
\]
Thus the soundness interpretation of $\mathrm{W}_p$ is the composite
\[
\sem{\Gamma}\tensor J_p(\sem{A})
\xrightarrow{\ \id_{\sem{\Gamma}}\tensor \omega_{p,\sem{A}}\ }
\sem{\Gamma}\tensor \emptyset
\xrightarrow{\ \rho_{\sem{\Gamma}}\ }
\sem{\Gamma}
\xrightarrow{\ \sem{\pi_0}\ }
\Out(B),
\]
where $\sem{\pi_0}\colon \sem{\Gamma}\to \Out(B)$ interprets the premise derivation.

Likewise, the rule
\[
\frac{\Gamma,\zone{p}{A},\zone{p}{A}\vdash B}{\Gamma,\zone{p}{A}\vdash B}\;\mathrm{C}_p
\]
is interpreted by tensoring with the designated diagonal
\[
\delta_{p,\sem{A}}\colon J_p(\sem{A})\to J_p(\sem{A})\tensor J_p(\sem{A}),
\]
so that its semantic action is given by
\[
\sem{\Gamma}\tensor J_p(\sem{A})
\xrightarrow{\ \id_{\sem{\Gamma}}\tensor \delta_{p,\sem{A}}\ }
\sem{\Gamma}\tensor J_p(\sem{A})\tensor J_p(\sem{A})
\longrightarrow \Out(B),
\]
where the second arrow is the interpretation of the premise, up to the canonical associativity and symmetry isomorphisms used to identify the codomain with
\[
\sem{\Gamma,\zone{p}{A},\zone{p}{A}}.
\]

Similarly, the rule
\[
\frac{\Gamma,\zone{r}{A},\zone{r}{A}\vdash B}{\Gamma,\zone{r}{A}\vdash B}\;\mathrm{C}_r
\]
is interpreted using the designated diagonal
\[
\delta_{r,\sem{A}}\colon J_r(\sem{A})\to J_r(\sem{A})\tensor J_r(\sem{A}).
\]

By contrast, there is no rule
\[
\mathrm{W}_r
\]
in the extracted calculus, and this absence is reflected semantically by the absence of a designated discard
\[
\omega_{r,\sem{A}}\colon J_r(\sem{A})\to\emptyset
\]
from the licensed family. Likewise, there is no rule
\[
\mathrm{C}_\ell,
\]
because the discipline supplies no diagonal
\[
\delta_{\ell,\sem{A}}\colon J_\ell(\sem{A})\to J_\ell(\sem{A})\tensor J_\ell(\sem{A}).
\]

Hence the proof theory mirrors the structural behaviour of the architecture exactly: the logic is not imposed externally on the network, but is read from the way the categorized architecture permits or forbids reuse and erasure of its different classes of inputs.

This example may be read as a minimal witness that the passage
\[
\text{neural architecture}\to\text{categorisation}\to\text{logic}
\]
is non-vacuous already at the level of small compositional networks.
\end{exa}
\section{Positioning with respect to related work}\label{sec:related}

The paper sits at the intersection of three literatures; \autoref{tab:related-comparison} summarises the contrast.

First, compositional semantics for learning systems studies how parametrised models and training procedures compose categorically \cite{FongSpivakTuyeras2019}. Categorical deep learning pushes that line further by proposing a single algebraic language for architectural specification and implementation \cite{GavranovicEtAl2024}. Our work is compatible with that programme but has a different target: we do not seek a universal algebra of all architectures. Instead, we isolate one particular invariant of architectural organization, namely differentiated structural behaviour of contexts and memory.

Second, subexponential proof theory and its categorical semantics provide the immediate
logical background for resource-indexed structural rules. More broadly, categorical semantics
for linear logic has a substantial history; see, for example,
\cite{See89,Bar91,HS03,Mel09}. Algorithmic and multimodal uses of subexponentials have
been studied for some time \cite{NigamMiller2009,XavierOlartePimentel2022}, and recent work develops categorical
models for intuitionistic linear logic with subexponentials \cite{Rogozin2025}. Our paper does not compete with that general theory. Rather, it contributes a complementary direction of travel: starting from categorized architectural data, we extract the subexponential signature that best matches the architectural discipline.

Third, there is growing interest in connecting logic and neural architecture design more directly \cite{SinghVasicKhurshid2020}. In that broader landscape, the present article should be read as a foundational bridge. We do not start with a logic and synthesise a network. We start with a family of zone-labelled architectural blocks and ask what logic their categorized resource discipline validates.

\begin{table}[t]
\centering
\begin{tabular}{@{}p{0.22\textwidth}p{0.21\textwidth}p{0.20\textwidth}p{0.29\textwidth}@{}}
\toprule
\textbf{Line of work} & \textbf{Starting point} & \textbf{Primary output} & \textbf{Difference with the present work} \\
\midrule
Categorical deep learning &
Neural architectures and their compositional structure &
Categorical semantics of parametrised learning systems &
The present work does not stop at a categorical semantics of architectures: it reconstructs a subexponential logical discipline from a categorized architectural resource policy. \\[0.5em]

Categorical semantics for subexponential linear logics &
A logic or proof system already given in advance &
A categorical model validating that logic &
Here the direction is reversed: the subexponential signature is not presupposed, but extracted from a coherently disciplined architectural category. \\[0.5em]

Logic-guided or logic-constrained neural design &
A logical theory, specification, or constraint language &
A neural architecture constrained or guided by that logic &
Here logic is not used to design the architecture from outside; rather, a logical system is read from the architectural treatment of zones, reuse, and discard. \\[0.5em]

This paper &
A categorized architecture equipped with a coherent zone discipline &
An extracted subexponential signature, a tensorial proof theory, and a sound categorical interpretation &
The contribution is an architecture-to-logic pipeline: categorized structural permissions determine the extracted logical rules. \\
\bottomrule
\end{tabular}
\caption{Positioning of the present work relative to nearby research directions.}
\label{tab:related-comparison}
\end{table}

\section{Conclusion}\label{sec:conclusion}

The article has established a first rigorous fragment of the architecture-to-category-to-logic programme. The main message is that one can begin from a categorized family of resource-sensitive architectural blocks, read off a subexponential signature from the licensed structural behaviour, and obtain a sound proof theory for the resulting tensorial fragment. The direction of explanation matters: the logic is extracted from the categorized architecture rather than fixed in advance.

Three extensions are immediate.
\begin{enumerate}
\item Enrich the categorized architecture with the additional structure needed to interpret linear implication, most likely through a satisfactory internal hom or a linear--nonlinear adjunction \cite{Benton1995,Rogozin2025}.
\item Replace the tensorial fragment by a fuller subexponential proof system using the extracted preorder more actively.
\item Study concrete architecture families whose memory, retrieval, or tool-use policies induce nontrivial zone disciplines.
\item The two-block neural example shows that the extraction programme is already non-vacuous
at the level of small compositional architectures: the logic is read from architectural
resource policies rather than imposed externally.
\end{enumerate}

Even in its present form, however, the paper already yields nontrivial theorems: cut elimination for the extracted calculus, structural adequacy for the licensed permissions, and soundness in the original categorized architecture. Those results make the proposal mathematically substantive enough to stand as a first paper in its own right.

\bibliographystyle{alphaurl}
\bibliography{refs}

\end{document}